\begin{document}
\title{Efficient Multiple Temporal Network Kernel Density Estimation}

\renewcommand{\thefootnote}{\fnsymbol{footnote}}
\author{Yu Shao}
\affiliation{
	\institution{\small East China Normal University}
	\city{Shanghai}
	\country{China}
}
\email{yushao@stu.ecnu.edu.cn}

\author{Peng Cheng}
\affiliation{
	\institution{\small East China Normal University}
	\city{Shanghai}
	\country{China}
}
\email{pcheng@sei.ecnu.edu.cn}

\author{Xiang Lian}
\affiliation{
	\institution{\small Kent State University}
	\city{Kent, Ohio}
	\country{USA}
}
\email{xlian@kent.edu}

\author{Lei Chen}
\affiliation{
	\institution{\small HKUST-GZ and HKUST}
	\city{Guangzhou and Hong Kong SAR}
	\country{China}
}
\email{leichen@cse.ust.hk}

\author{Wangze Ni}
\affiliation{
	\institution{\small HKUST}
	\city{Hong Kong SAR}
	\country{China}
}
\email{wniab@cse.ust.hk}

\author{Xuemin Lin}
\affiliation{
	\institution{\small Shanghai Jiaotong University}
	\city{Shanghai}
	\country{China}
}
\email{xuemin.lin@gmail.com}

\author{Chen Zhang}
\affiliation{
	\institution{\small Hong Kong Polytechnic University}
	\city{Hong Kong SAR}
	\country{China}
}
\email{jason-c.zhang@polyu.edu.hk}

\author{Liping Wang}
\affiliation{
	\institution{\small East China Normal University}
	\city{Shanghai}
	\country{China}
}
\email{lipingwang@sei.ecnu.edu.cn}

\newcommand{\revision}[1]{\color{blue}{#1} \color{black}}

\begin{abstract}
   Kernel density estimation (KDE) has become a popular method for visual analysis in various fields, such as financial risk forecasting, crime clustering, and traffic monitoring. KDE can identify high-density areas from discrete datasets. However, most existing works only consider planar distance and spatial data. In this paper, we introduce a new model, called TN-KDE, that applies KDE-based techniques to road networks with temporal data. Specifically, we introduce a novel solution, Range Forest Solution (RFS), which can efficiently compute KDE values on spatiotemporal road networks. To support the insertion operation, we present a dynamic version, called Dynamic Range Forest Solution (DRFS). We also propose an optimization called Lixel Sharing (LS) to share similar KDE values between two adjacent lixels. Furthermore, our solutions support many non-polynomial kernel functions and still report exact values. Experimental results show that our solutions achieve up to 6 times faster than the state-of-the-art method.
\end{abstract}

\begin{CCSXML}
<ccs2012>
	<concept>
		<concept_id>10002951.10003227.10003236</concept_id>
		<concept_desc>Information systems~Spatial-temporal systems</concept_desc>
		<concept_significance>500</concept_significance>
	</concept>
</ccs2012>
\end{CCSXML}
\ccsdesc[500]{Information systems~Spatial-temporal systems}

\keywords{Spatial-Temporal Query, Kernel Density Estimation, Shorest Path}

\maketitle

\section{INTRODUCTION}

Kernel Density Estimation (KDE) is a popular non-parametric way to smooth data, detect hot spots, and analyze event distribution~\cite{silverman_density_2018, gramacki_nonparametric_2018}, which is widely applied in financial risk forecasting~\cite{diebold_multivariate_1999, diebold_evaluating_1998, harvey_kernel_2012}, crime clustering~\cite{brunsdon_visualising_2007, nakaya_visualising_2010, hart_kernel_2014}, and traffic accident avoiding~\cite{black_highway_1991, xie_kernel_2008, plug_spatial_2011}. It often uses a finite and discrete dataset to generate a smooth distribution.
 The density value of a position $q$ can be estimated from the dataset $O$ by:
\begin{equation}
	F(q) = w \cdot \sum_{o_i \in O} {K}\Big(\frac{dist(q, o_i)}{b}\Big),
\end{equation}
where $w$ is a scaling factor defaulted as $1$ and ${K}(\cdot)$ is the kernel function, which can be polynomial kernel function (e.g., Triangular~\cite{fleuret_scale-invariance_2003, gong_estimating_2014}  and Epanechnikov~\cite{samiuddin_nonparametric_1990, bil_identification_2013}), or transcendental kernel function (e.g., Gaussian~\cite{scholkopf_comparing_1997, kristan_multivariate_2011}  and Cosine~\cite{de_felice_short-term_2015}). Table \ref{tab1} summarizes some popular kernel functions and their expressions, where $dist(q, o_i)$ indicates the Euclidean distance.

\begin{table}[!h]
	\centering
	\caption{Kernel Functions}
	\vspace{-0.5em}
	\label{tab1}
	\begin{tabular}{c|c}
		\hline
		Kernel       & Function                                             \\ \hline
		Triangular \cite{fleuret_scale-invariance_2003, gong_estimating_2014}   & $1-\frac{1}{b}dist(q,o_i)$         \\ 
		Epanechnikov \cite{samiuddin_nonparametric_1990, bil_identification_2013} & $1-\frac{1}{b^2}dist(q,o_i)^2$     \\ 
		Gaussian \cite{scholkopf_comparing_1997, kristan_multivariate_2011}     & $exp(-\frac{1}{b^2}dist(q,o_i)^2)$ \\ 
		Cosine \cite{de_felice_short-term_2015}       & $\cos(\frac{1}{b}dist(q,o_i))$     \\ \hline
	\end{tabular}
\end{table}

The density value conveys the combined impact of events within the bandwidth. The weight assigned to closer events is higher, which is determined by the kernel function. A basic distance measure is the Euclidean distance, implying that the impact of events spreads linearly.

\begin{figure}[t!]
	\centering
	\subfigure[Heatmap in 3 A.M. -- 6 A.M.]{\scalebox{0.35}[0.35]{\includegraphics{./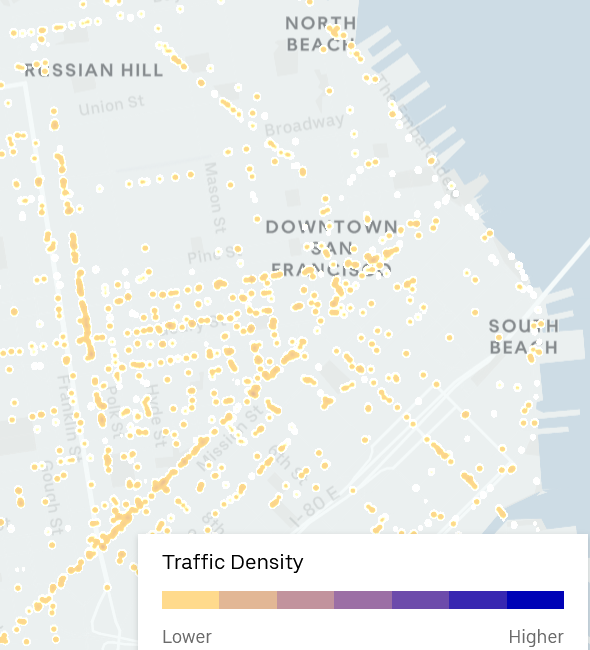}}
		\label{subfig:morning_heatmap}}
	\subfigure[Heatmap in 3 P.M. -- 6 P.M.]{\scalebox{0.35}[0.35]{\includegraphics{./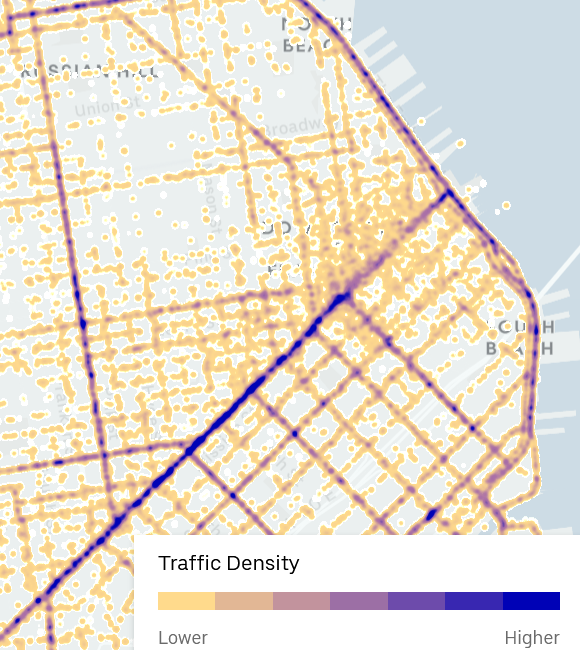}}
		\label{subfig:afternoon_heatmap}}
	\caption{Mobility Heatmaps in San Francisco Area}
	\label{fig:temporal_heatmap}
\end{figure}

\noindent\textbf{Challenges.} Since the KDE-based algorithm is easy to understand, implement and visualize, it has been widely supported in many geographic software (e.g., ArcGIS \cite{noauthor_arcgis_nodate}, QGIS \cite{noauthor_qgis_nodate}, and KDV-Explorer\cite{chan_kdv-explorer_2021}). However, the original KDE has limitations in the following three applications:

\begin{itemize}[leftmargin=*]

	\item Temporal clustering: KDE only considers spatial distance to measure event density. However, many events are time-sensitive \cite{black_highway_1991}. Users may filter events in a specific time period to reveal temporal relations \cite{brunsdon_visualising_2007}. For instance, Uber Movement displays mobility heatmaps for selected cities during specific time periods. Figure \ref{fig:temporal_heatmap} shows two traffic mobility heatmaps of the San Francisco area in Q1 2020, from 3 A.M. to 6 A.M. and from 3 P.M. to 6 P.M., respectively.

	\item Network application: Xie et al. \cite{xie_kernel_2008} found that planar KDE can overestimate density values in networks. Figure \ref{fig:kde_example} compares planar and network KDE. An event point that is $50m$ away in Euclidean distance is actually $70m$ in shortest path distance due to the difference between straight and network distances.
	
	\item Multiple and online queries: Users often call multiple online queries with different parameters to generate a set of multiple KDEs, then explore them to select the best smooth KDE. The parameter selectors usually update parameters based on the errors from the last estimations, which requires real-time results. However, existing solutions are inefficient in generating each KDE on networks from raw data independently \cite{gong_estimating_2014, gan_scalable_2017, plug_spatial_2011, brunsdon_visualising_2007, chan_safe_2021, cristianini_dynamically_1998, gramacki_nonparametric_2018}.
\end{itemize}

\begin{figure}[t!]
	\centering
	\subfigure[KDE on Euclidian Distance]{\scalebox{0.6}[0.6]{\includegraphics{./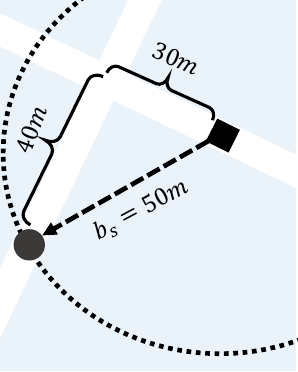}}
		\label{subfig:loss_13}}
	\hspace{1em}
	\subfigure[KDE on Network Distance]{\scalebox{0.6}[0.6]{\includegraphics{./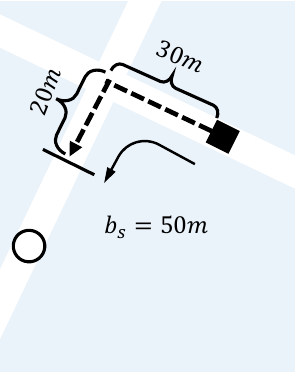}}
		\label{subfig:loss_12}}
	\vspace{-1em}
	\caption{An example of over-estimation bandwidth.}
	\vspace{-2em}
	\label{fig:kde_example}
\end{figure}

Existing studies have identified ways to improve Kernel Density Estimations (KDEs). To support temporal estimation, Charlotte et al.~\cite{plug_spatial_2011} created KDEs by hour, day, or month to study the relationship between events over time. Another common method is to use a space-time cube in 3D space \cite{nakaya_visualising_2010, black_highway_1991} and estimate density values with the space-time kernel function \cite{brunsdon_visualising_2007, romano_visualizing_2017, chan_sws_2021}.
To fit the network, instead of Euclidean distance, we can use network distance \cite{borruso_network_2005} and compute the kernel density value per linear unit (called lixel) along edges instead of per pixel on the Euclidean space \cite{xie_kernel_2008}. To speed up repeated computation, Chan et al. \cite{chan_fast_2021} build a list index on each edge. 
To handle multiple and online queries, previous studies utilize partition methods (e.g., K-d tree~\cite{chan_efficient_2020, chan_quad_2020, chan_karl_2019}, Gridding~\cite{hart_kernel_2014, black_highway_1991}, Fast Fourier Transform (FFT)~\cite{silverman_algorithm_1982, gramacki_nonparametric_2018}, Clustering~\cite{auber_interactive_2005, abello_ask-graphview_2006, hinneburg_denclue_2007}, and Binning~\cite{liu_immens_2013, gramacki_nonparametric_2018, li_interactive_2014}) to first divide the whole datasets into groups and then aggregate similar events to compute their density value contributions \cite{liu_immens_2013}.

Although these works have some optimizations, no single method can efficiently resolve all three major limitations. Thus, we propose Temporal Network Kernel Density Estimation (TN-KDE) for networks with spatiotemporal datasets.

\noindent\textbf{Contributions.} Specifically, in this paper, we introduce Range Forest Solution (RFS), an efficient method for computing KDE values within a specified query time window. RFS employs a memory-shared and tree-based structure to maintain events in both spatial and temporal dimensions. This innovative approach enables the handling of multiple temporal queries while significantly reducing memory consumption. 

Additionally, we propose Dynamic Range Forest Solution (DRFS), which extends RFS by incorporating a dynamic structure to support insertion operations. DRFS also offers users the ability to fine-tune the quantization of the structure to various precision levels. This flexibility allows for customization of index size and processing time, enabling users to optimize memory and performance based on their specific requirements.

Furthermore, leveraging the observation that two adjacent lixels frequently share a similar KDE values, we develop an optimization technique called Lixel Sharing (LS), which enables the computation of KDE values for shared lixels simultaneously.

Finally, our framework supports more complicated kernel functions, including the Exponential kernel function and the Cosine kernel function, to compute exact KDE values.

In this paper, we make the following contributions:

\begin{itemize}[leftmargin=*]
	\item  We formally define the temporal network kernel density estimation (TN-KDE) problem and introduce a basic framework in Section~\ref{sec3:preliminaries}.
	
	\item We propose two solutions, namely Range Forest Solution (RFS) in Section \ref{sec5:solution} to efficiently maintain events in both spatial and temporal dimensions, as well as Dynamic Range Forest Solution (DRFS) in Section \ref{sec7:dynamic} to support insertion operation.
	
	\item We develop the Lixel Sharing (LS) optimization technique in Section \ref{sec6:LixelsAggregaion} to reduce redundancy and enhance  efficiency. We analyze many non-polynomial kernel functions that can be applied in our framework in Section \ref{sec7:kernel}.
	
	\item We test on real-world datasets with various scales and categories to show our proposed methods' efficiency and effectiveness in Section \ref{sec8:exp}.
\end{itemize}

\section{Related Work}

\noindent\textbf{Partition-based methods}.
KDE-based algorithms can be time-consuming \cite{gramacki_nonparametric_2018}. To avoid this, a popular method is aggregation, where homologous events are aggregated and their density value contribution is computed in constant time~\cite{liu_immens_2013}. Efficient partition methods, such as:

\begin{itemize}[leftmargin=*]
	\item \textbf{K-d tree}~\cite{chan_efficient_2020, chan_quad_2020, chan_karl_2019}. The K-d tree is one of the most popular structures for organizing points in multi-dimensional space. It partitions the hyperspace in the middle, repeating in each dimension in turn. Each non-leaf node is an aggregation of a hypercube, and each leaf node represents an event point. However, the K-d tree is effective only in low dimensions (around 10). For higher dimensions, the ball tree is more appropriate~\cite{gray_nonparametric_2003}.
	
	\item \textbf{Gridding}~\cite{hart_kernel_2014, black_highway_1991}. Gridding involves dividing the event space into equally spaced grids. Each event will be assigned to the located grid. Density values are computed at the center of grids, not at events. The challenge is to choose the appropriate gridding size. If the size is large, the error between the event and the grid center will also be large. However, if the size is small, many grids may be empty, leading to an even larger time complexity.
	
	\item \textbf{Fast Fourier Transform (FFT)}~\cite{silverman_algorithm_1982, gramacki_nonparametric_2018}. After gridding, FFT can be used to speed up the computation of kernel density values. This involves rewriting the kernel equation as a kernel matrix multiplied by an event matrix. FFT is particularly useful for large kernels and bandwidths. However, it may be computationally expensive for sparse matrices~\cite{fan_fast_1994}.
	
	\item \textbf{Clustering}~\cite{auber_interactive_2005, abello_ask-graphview_2006, hinneburg_denclue_2007}. Clustering groups nearby events by replacing them with a new point, making it effective for sparse events. The algorithm generates a cluster tree during repetition, which can be used for interactive visualization.
	
	\item \textbf{Binning}~\cite{liu_immens_2013, gramacki_nonparametric_2018, li_interactive_2014}. Binning is a method of creating unequal grids by dividing a continuous range into adjacent intervals. The intervals do not have to be equally spaced. Binning assigns a weight to represent events in a bin~\cite{fan_fast_1994} instead of a constant point to replace them. This weighted aggregation is more accurate and can provide an exact solution.
\end{itemize}

\noindent\textbf{Temporal-based methods.}	
Temporal data analysis has been widely studied. To update temporal data, there are three modes:
	
	\begin{itemize}[leftmargin=*]
		\item Static data~\cite{patuelli2007network, mouratidis2006continuous}: Static data is permanent and unchanging. Temporal attribution can be seen as a new dimension and an index can be built on it. Most methods can only handle static data, such as PAS and RFS in this paper.
		\item Persistent data~\cite{sasikala2014uncertain, cheng2012spatio}: Persistent data can be updated anytime and anywhere. Basic methods can only support update and query operations, while more complex indexes are required for others.
		\item Streaming data~\cite{koudas2004approximate, figueiras2018real, li2021trace}: Streaming data is a special type of persistent data where new updates only appear in the latest. It has fewer requirements than persistent data but is still practical if new data consistently arrives over time.
	\end{itemize}
	
\noindent\textbf{Kernel function computation.}
	Weighted aggregation is an important operation in KDE algorithms. Previous studies could only aggregate data with polynomial kernel functions, while other non-polynomial kernel functions like Cosine and Exponential can only be estimated~\cite{chan_karl_2019, chan_quad_2020}. This is done by replacing values with minimum and maximum KDE values or using a polynomial bound function as a substitute. However, there is currently no exact method to calculate the KDE value.

\section{Preliminaries}
\label{sec3:preliminaries}

In this section, we first give the formal definition of the TN-KDE problem along with some state-of-the-art solutions. Some key notations are listed in Table~\ref{tab:symbols}.

\subsection{Problem Definition}

\begin{figure}[t!]\centering
	\scalebox{0.6}[0.6]{\includegraphics{./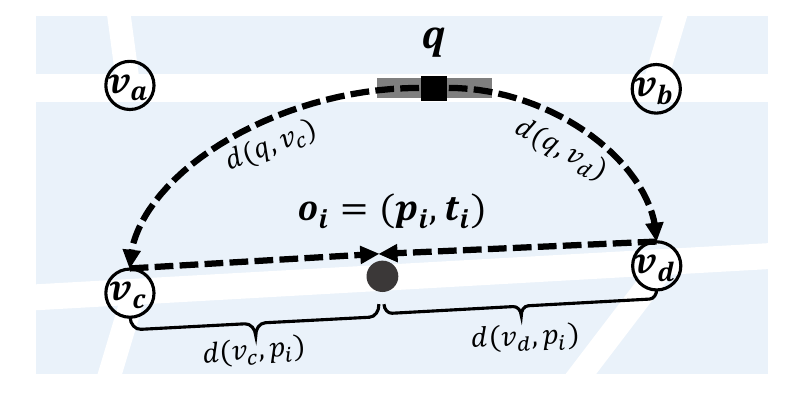}}
	\vspace{-1em}
	\caption{An example of a query $q$ (denoted by black square) and an event $o_i$ (denoted by black dot).}
	\label{fig:sketch}
\end{figure}

\begin{definition}[Road Network]
A road network is a graph $G = (V, E)$, where $V$ is the vertex set and $E \subseteq V \times V$ is the edge set. We use $d(v_a, v_b)$  to denote the shortest distance from $v_a$ to $v_b$ on $G$. 
\end{definition}

To generate a KDE on the given road network, we follow existing studies \cite{xie_kernel_2008, chan_fast_2021} that each road is divided into a set of same-length segments, called lixels.

\begin{definition}[Lixel]
	Given a road network $G = (V, E)$, each edge $(v_a,v_b) \in E$ is divided into small segments using the same spatial length $g$.
\end{definition}

Each lixel is a basic query unit denoted by $q$ that requires a KDE value, while its position is the center point.

The total number of lixels is
$L = \sum_{(v_a,v_b) \in E} \left\lceil \frac{d(v_a,v_b)}{g} \right\rceil$. In Figure \ref{fig:sketch}, the gray segment is a lixel on edge $(v_a,v_b)$ and the center is marked by a black square.

\begin{definition}[Event]
	An event $o_i=(p_i, t_i) \in O$ happens at position $p_i$ at time $t_i$. The number of events on the edge $e$ is $n_e$.
\end{definition}

TN-KDE aims to generate a Kernel Density Estimation (KDE) on a road network within a given spatial and temporal range:

\begin{definition}\textit{(Temporal Network Kernel Density Estimation, TN-KDE)}
	Given a road network $G=(V, E)$, an events set $O$, the spatial bandwidth $b_s$ and the temporal bandwidth $b_t$, the problem of temporal network kernel density estimation (TN-KDE) is to compute KDE values for all lixels $F(q)$ with the time range $[t - b_t, t + b_t]$:
\begin{equation}
\label{eq-def}
\begin{aligned}
	F(q) &= \sum_{o_i=(p_i, t_i) \in O} f(q, o_i), \\
	f(q, o_i) &= 
	K_s\left(\frac{d(q, p_i)}{b_s}\right)
	K_t\left(\frac{\vert t - t_i \vert}{b_t}\right),
\end{aligned}
\end{equation}
where the functions $K_s(\cdot)$ and $K_t(\cdot)$ are kernel functions (which can be chosen from Table \ref{tab1} in arbitrary).
The spatial distance is the shortest path distance $d(q, p_i)$ and the temporal distance is the time difference $\vert t - t_i \vert$. Note the domain of kernel functions is $[0, 1]$, thus we do not consider events out of bandwidth.
When all lixels queries of a TN-KDE are calculated, a heatmap (similar to Figure \ref{fig:temporal_heatmap}) of the road network is achieved.
\end{definition}


\begin{table}[t!]
	\centering
	\caption{Symbols and their descriptions.}
	\vspace{-1em}
	\begin{tabular}{l|l}\label{tab:symbols}
	\textbf{Symbol} & {\textbf{Description}}\\ \hline\hline
	$G=(V,E)$      & a graph $G$ with vertices $V$ and edges $E$ \\
	$q$	     	   & the query lixel \\
	$t$		       & the query time \\
	$o_i = (p_i, t_i)$ & an event $o_i$ at position $p_i$ at time $t_i$ \\
	$d(u, v)$      & the shortest path distance from $u$ to $v$ \\
	$b_s, b_t$     & the spatial and temporal bandwidth \\
	$L, N$		   & the number of lixels and events \\
	$n_e$          & the number of events on edge $e$ \\
	\hline
	\end{tabular}
\end{table}

\subsection{The State-of-the-Art Solutions}

To the best of our knowledge, Aggregate Distance Augmentation (ADA)~\cite{chan_fast_2021} is the state-of-the-art solution for massive events. The original framework of ADA only supports spatial distance so we set the Epanechnikov kernel function as $K_s$ and ignore the temporal kernel function $K_t$, i.e.,
\begin{equation*}
	f(q, o_i) = 1 - \frac{1}{b_s^2} dist(q, p_i)^2
\end{equation*}

The core step of ADA is to rewrite Equation~\ref{eq-def} as:
\begin{equation}
\label{eq-transform}
\begin{aligned}
	F(q) &= \sum_{e \in E} F_e(q) \\
	F_e(q) &= \sum_{o_i \in O_e} 1 - \frac{1}{b_s^2} d(q, p_i)^2
\end{aligned}
\end{equation}
where $O_e$ contains all events on the edge $e$. This transformation decomposes the total KDE value $F(q)$ onto each edge as $F_e(q)$.

Figure~\ref{fig-ADA} shows the illustration of the ADA method on a specific edge. Suppose the lixel $q$ locates on the edge $(v_a, v_b)$ and $e=(v_c, v_d)$, the shortest path from $q$ to $p_i$ must go through either $v_c$ or $v_d$. Take $v_c$ as an example, ADA has:
\begin{equation*}
\begin{aligned}
	F_\Gamma(q) &= \sum_{o_i \in O_\Gamma} \left(1 - \frac{[d(q, v_c) + d(v_c, p_i)]^2}{b_s^2}\right) \\
	&= \frac{1}{b_s^2} \cdot
	\begin{bmatrix}
		-1 \\
		-2 \cdot d(q, v_c) \\
		b_s^2 - d(q, v_c)^2
	\end{bmatrix}^\top
	\begin{bmatrix}
		\sum_{o_i \in O_\Gamma} d(v_c, p_i)^2 \\
		\sum_{o_i \in O_\Gamma} d(v_c, p_i) \\
		\vert O_\Gamma \vert \\
	\end{bmatrix},
\end{aligned}
\end{equation*}
\begin{figure}[t]\centering
	\scalebox{0.6}[0.6]{\includegraphics{./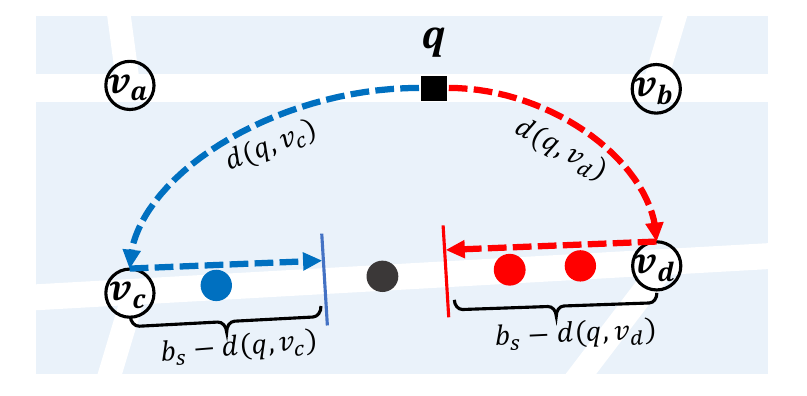}}
	\vspace{-1em}
	\caption{The illustration of the ADA method.}
	\vspace{-1em}
	\label{fig-ADA}
\end{figure}
where $O_\Gamma$ is the event aggregation including all events closer to $v_c$. The first vector along with $\frac{1}{b_s^2}$ (denoted by $\mathbf{Q}$) is constant for the lixel $q$, while the second vector (denoted by $\mathbf{A}$) is an aggregation of some continuous events.

Observing that all items in $\mathbf{A}$ follow the associative law, we can precompute the aggregation value linearly, denoted by $\mathbf{A}_1, \mathbf{A}_2 ... \mathbf{A}_{n_e}$, where $n_e$ is the number of events on $e$. Therefore, for each lixel $q$, ADA can use $b_s - d(q, v_c)$ to find the boundary by the binary search, i.e., the largest index $i$ that $d(v_c, p_i) \le b_s - d(q, v_c)$, and compute the density value by $F_\Gamma(q) = \mathbf{Q} \cdot \mathbf{A_i}$.

The other side of $v_d$ is similar except for that the index order is reversed. Figure~\ref{fig-ADA} displays two aggregations from $v_c$ (marked by blue) and $v_d$ (marked by red), respectively. Since two aggregations might be overlapped for a larger bandwidth, another boundary $d(v_c, p_i) \le \frac{- d(q, v_c) + d(q, v_d) + d(v_c, v_d)}{2}$ (i.e., breakpoint or middle point) is required, which means events should choose a shorter path if both sides are available.

The only remaining challenge lies in the time-consuming process of computing the shortest path distance $d(q, v_c)$ for different lixels $q$. Instead, most recent works adopt the Shortest Path Sharing~(SPS) method~\cite{rakshit2019efficient} to share the shortest path distance. Specifically, the shortest path from $q$ to $v_c$ is either $q {\rightarrow} v_a {\rightarrow} v_c$ or $q {\rightarrow} v_b {\rightarrow} v_c$. Once we precompute $d(v_a, v_c)$ and $d(v_b, v_c)$, other lixels can reuse these partial shortest paths.

\begin{lemma}
	\label{lemma:ADA}
	The time complexity of the ADA method is $O(\vert E \vert \cdot T_{sp} + L \cdot \vert E \vert \cdot \log(\frac{N}{\vert E \vert}))$~\cite{chan_fast_2021}.
\end{lemma}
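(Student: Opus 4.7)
The plan is to decompose the runtime of ADA into a preprocessing phase and a query phase, and bound each separately. For the preprocessing, I would argue that for every edge $e = (v_c, v_d) \in E$ we only need the shortest path distances from its two endpoints $v_c$ and $v_d$ to every other vertex: this is exactly what Shortest Path Sharing (SPS) requires, since any lixel $q$ on $(v_a, v_b)$ reuses $d(v_a, v_c), d(v_b, v_c), d(v_a, v_d), d(v_b, v_d)$ to reconstruct $d(q, v_c)$ and $d(q, v_d)$. Running one single-source shortest path from each endpoint gives a total cost of $O(|E| \cdot T_{sp})$. After that, sorting the events on every edge $e$ by distance to its endpoints and building the three prefix-sum arrays $\mathbf{A}_1, \dots, \mathbf{A}_{n_e}$ costs $O(n_e \log n_e)$ per edge, which sums to $O(N \log N)$ and is absorbed into the query bound for typical parameters, or can be folded into the SPS term when $T_{sp} \ge \log N$.

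Next I would handle the query phase. For each of the $L$ lixels $q$, the rewriting in Equation~\ref{eq-transform} forces us to iterate over every edge $e \in E$ and call $F_e(q)$. Thanks to the precomputed prefix sums, each such call reduces to (i) computing the two constants $b_s - d(q, v_c)$ and $b_s - d(q, v_d)$ together with the breakpoint, (ii) running at most three binary searches over the sorted event list on $e$ to locate the relevant prefix indices, and (iii) a constant number of inner products $\mathbf{Q} \cdot \mathbf{A}_i$. Each binary search costs $O(\log n_e)$, so processing a single lixel against a single edge is $O(\log n_e)$.

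Summing over all edges for a fixed lixel gives $\sum_{e \in E} O(\log n_e)$. Here the key step is to apply Jensen's inequality to the concave function $\log$: since $\sum_{e \in E} n_e = N$, we have
\begin{equation*}
\sum_{e \in E} \log n_e \;\le\; |E| \cdot \log\!\Bigl(\tfrac{N}{|E|}\Bigr).
\end{equation*}
Multiplying by the $L$ lixels yields an overall query cost of $O(L \cdot |E| \cdot \log(N/|E|))$. Adding the preprocessing cost produces the claimed bound $O(|E| \cdot T_{sp} + L \cdot |E| \cdot \log(N/|E|))$.

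The main obstacle, in my view, is not any single calculation but being careful that every per-edge operation inside the query loop is truly $O(\log n_e)$ rather than $O(n_e)$; in particular, I need to justify that the middle-point boundary can also be resolved by binary search on the same sorted arrays, and that the reconstruction of $d(q, v_c)$ from the precomputed endpoint distances is $O(1)$ once SPS has been run. A secondary subtlety is edges with $n_e = 0$, which contribute $\log 1 = 0$ but still cost $O(1)$ per lixel; this adds a benign $O(L \cdot |E|)$ term that is already dominated by the stated bound.
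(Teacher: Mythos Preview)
Your proposal is correct and follows essentially the same approach as the paper's proof: bound the shortest-path sharing by $|E|\cdot T_{sp}$, bound each per-lixel per-edge cost by $O(\log n_e)$ via binary search, and then collapse $\sum_{e}\log n_e$ using concavity of $\log$. The only cosmetic difference is that the paper invokes the AM--GM inequality while you invoke Jensen's inequality, which are equivalent here; your write-up is also more explicit about side costs (sorting, the breakpoint search, empty edges) that the paper's brief proof leaves implicit.
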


\begin{proof}
	The shortest paths are shared for all lixels on the same edge, costing $\vert E \vert T_{sp}$. The aggregated vector $\mathbf{A}$ requires binary search, costing $L \cdot \sum\log(n_e) \le L \cdot \vert E \vert \cdot \log(\frac{N}{\vert E \vert})$ using AM-GM inequality.
\end{proof}

\subsection{Framework}
\label{sec3.3:framework}

The TN-KDE problem is more complex since the temporal distance $\vert t - t_i \vert$ is hard to aggregate. Instead, we continue to divide events into doubled aggregations in which $t_i < t$ and $t_i \ge t$, respectively. Considering the aggregation in which events are closer to $v_c$ and the time $t_i < t$, suppose the Triangular kernel function is applied, the aggregated density value of $O_\Gamma$ can be updated as:
\begin{align}
	\label{eq:aggregate}
	&F_\Gamma(q) \notag \\
	=& \sum_{o_i \in O_\Gamma}
	\left(1 - \frac{d(q,p_i)}{b_s}\right)
	\left(1 - \frac{\vert t - t_i \vert}{b_t}\right) \notag \\
	=& \sum_{o_i \in O_\Gamma}
	\left(1-\frac{d(q, v_c) + d(v_c, p_i)}{b_s}\right)
	\left(1-\frac{t - t_i}{b_t}\right) \\
	=& \frac{1}{b_sb_t} \cdot 
	\begin{bmatrix}
		-1 \\
		b_s-d(q, v_c) \\
		-(b_t-t) \\
		(b_s-d(q, v_c))(b_t-t) \\
	\end{bmatrix}^\top \cdot
	\begin{bmatrix}
		\sum_{o_i \in O_\Gamma} d(v_c,p_i)t_i \\
		\sum_{o_i \in O_\Gamma} d(v_c,p_i) \\
		\sum_{o_i \in O_\Gamma} t_i \\
		\vert O_\Gamma \vert 
	\end{bmatrix} \notag
\end{align}

The first $\mathbf{Q}$ vector is still constant for the lixel $q$, while the second $\mathbf{A}$ vector is the aggregation of continuous events with different items. Algorithm~\ref{algo:framework} is the framework of TN-KDE. For each lixel $q$ and edge $e$, TN-KDE retrieves the query vector $\mathbf{Q}$ and the aggregated vector $\mathbf{A}$ to compute the aggregated density value $F_\Gamma(q)$. These aggregated density values are accumulated as the result of $F(q)$.
\begin{algorithm}[h]
	\caption{TN-KDE Framework}
	\label{algo:framework}
	\DontPrintSemicolon
	\SetKwComment{comment}{$\triangleright$ }{}
	
	\For{each lixel $q$ }{
		\For{each edge $e=(v_c, v_d) \in E$}{
			Get shared shortest path $d(q, v_c)$ and $d(q, v_d)$ \\
			\For{each aggregation $\Gamma$ on edge $e$}{
				$\mathbf{Q} \leftarrow$ the query vector 

				$\mathbf{A} \leftarrow$ the aggregated vector

				$F_{\Gamma}(q) \leftarrow \mathbf{Q} \cdot \mathbf{A}$ \\
				$F_e(q) \leftarrow F_e(q) + F_{\Gamma}(q)$
			}
		}
		$F(q) \leftarrow F(q) + F_e(q)$
	}
\end{algorithm}

This framework shares similarities to the ADA method, but the primary challenge lies in efficiently retrieving aggregations~(Line 4 in Algorithm~\ref{algo:framework}). While ADA relies on a simple binary search to obtain two aggregations, TN-KDE necessitates a more complex approach that accounts for both spatial and temporal dimensions simultaneously. In the subsequent sections, we will explore efficient indexes on each edge to maintain these aggregations without incurring additional time costs.

\section{Efficient Aggregation Method}
\label{sec5:solution}

In this section, we propose an efficient aggregation method called the Range Forest Solution to maintain events on each edge, allowing for rapid retrieval of the aggregation with various query times and distances.

\subsection{Range Forest Solution}
\label{subsec:RFS}
	
	The range tree~\cite{de2000computational} is an efficient tree-based index that maintains events hierarchically. Each tree node maintains an aggregation of some events, allowing for quick retrieval.
	Initially, all events are sorted according to their relative position~(e.g., $d(v_c, p_i)$) and the root node encompasses all events. Then each tree node will be recursively divided into two subsequent aggregations, each of which maintains half of the events. Specifically, a tree node $u$ with events $\{o_l,...,o_r\}$ has two child nodes $lc(u)$ and $rc(u)$ with events $\{o_l,...,o_{\lfloor (l + r) / 2 \rfloor}\}$ and $\{o_{\lfloor (l + r) / 2 \rfloor + 1},...,o_r\}$, respectively. Each tree node also has a range to represent the aggregation, e.g., $R(u) = [d(v_c, o_l), d(v_c, o_r)]$.
	
	$T_4$ in Figure~\ref{fig:RFS1} is an example of the range tree with four events $o_1, o_2, o_3, o_4$ on the edge $(v_c, v_d)$ from left to right. The root node contains all four events, while each child node inherits half of the events.

	The range tree only contains spatial information. To support temporal queries, we progressively add events into the range tree according to their respective time $t_i$. All intermediate states are persistently stored to form a series of range trees, called the range forest. Figure~\ref{fig:RFS1} shows sequential states following each insertion, with the time order $o_1, o_3, o_4, o_2$. The updated tree nodes are highlighted in blue, while other unmodified nodes remain unchanged. Notably, the range $R(u)$ is constant during the insertion process though some events may have not been appeared.
	
\begin{figure}[h!]\centering\vspace{-2ex}
\scalebox{0.33}[0.33]{\includegraphics{./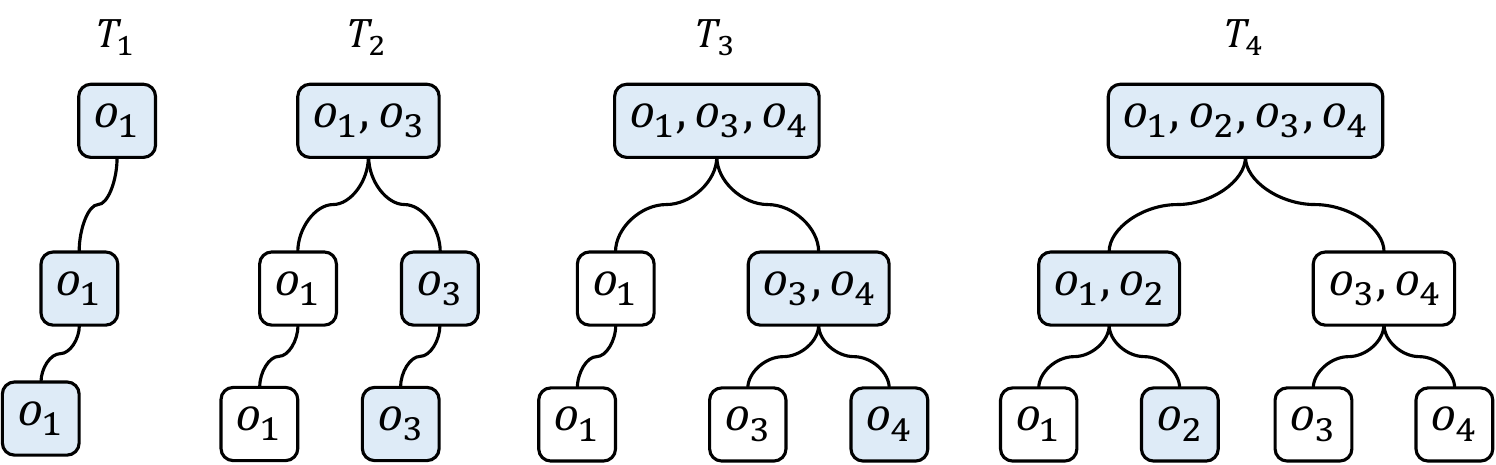}}
\caption{A range forest with four event points. The insertion order is $\{o_1, o_3, o_4, o_2\}$, forming four range trees $T_1,T_2,T_3,T_4$.}
\label{fig:RFS1}
\end{figure}

\subsection{Query on the Range Forest}

	The query process on a range forest consists of two steps, \textit{generating} and \textit{detecting}.

	\noindent\textbf{Generating.} By subtracting two range trees, we can create a new range tree that contains all events within a time period. This approach allows us to generate a range tree consisting of events in the querying time window. Figure~\ref{fig:RFS2} provides an example of a time window containing two events $o_3,o_4$ by subtracting $T_1$ from $T_3$.

\begin{figure}[h!]\centering\vspace{-2ex}
	\scalebox{0.35}[0.35]{\includegraphics{./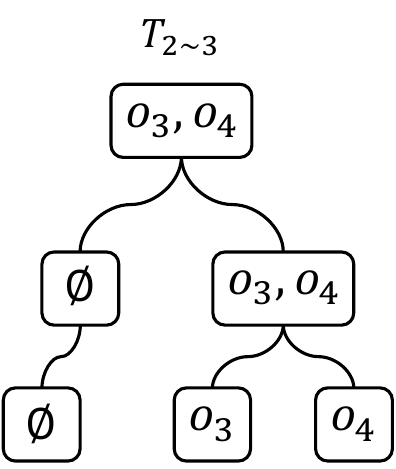}}
	\caption{An example of a query on the range forest, where the time window contains two events $o_3$ and $o_4$.}
	\label{fig:RFS2}
\end{figure}

	\noindent\textbf{Detecting.} After generating a range tree, we then detect this range tree with the aggregation boundary. Considering the $v_c$ side as an example, the boundary has two conditions: (1) the position cannot exceed the spatial bandwidth $b_s$ and (2) the position is closer to $v_c$ than $v_d$, which are:
	\begin{equation}
	\begin{aligned}
		d(v_c, p_i) &\le b_s - d(q, v_c) \\
		d(v_c, p_i) &\le \frac{-d(q, v_c) + d(q, v_d) + d(v_c, v_d)}{2}
	\end{aligned}
	\end{equation}
	
	Let $R {=} [0, \min(b_s - d(q, v_c), \frac{-d(q, v_c) + d(q, v_d) + d(v_c, v_d)}{2})]$ be the spatial range. We aim at retrieving all events where $d(v_c, p_i) \in R$. This problem can be solved by a recursive search shown in Algorithm~\ref{algo:RFQ}. We track the tree nodes $u_l$ and $u_r$ on two trees $T_{l-1}$ and $T_r$ to simulate the subtraction. $R(u_r)$ (as well as $R(u_l)$) represents the current tree node range. If $R(u_r) \cap R = \varnothing$~(Line 4), all events are out of the range so return a zero-vector. If $R(u_r) \subseteq R$~(Line 6), all events are within the range so return the subtraction of the aggregated vector $\mathbf{A}(u_r) - \mathbf{A}(u_l)$. Otherwise, search in two child nodes in detail.

	\begin{algorithm}[t]
		\caption{Range Forest Query}
		\label{algo:RFQ}
		\DontPrintSemicolon
		\KwIn{the temporal query range $[T_l, T_r]$ \\ \hspace{2.75em} the spatial query range $R$}
		\KwOut{the aggregated vector $\mathbf{A}$}

		$u_l \leftarrow root(T_{l-1}), u_r \leftarrow root(T_r)$
		
		\Return{$DualDetect(u_l, u_r)$}
		
		\SetKwFunction{DQ}{DualDetect}
		\SetKwProg{Fn}{Function}{:}{}
		\Fn{\DQ($u_l, u_r$)}{
		
		\uIf{$R(u_r) \cap R = \varnothing$}{
			\Return{$\mathbf{0}$}
		}
		\ElseIf{$R(u_r) \subseteq R$}{
			\Return{$\mathbf{A}(u_r) - \mathbf{A}(u_l)$}
		}
		\Else{
			\Return{
				$DualDetect(lc(u_l), lc(u_r)) + DualDetect(rc(u_l), rc(u_r))$
			}
		}
		}
	\end{algorithm}

\begin{lemma}
	Algorithm~\ref{algo:RFQ} takes $O(\log{n_e})$ time to find the aggregated vector of events in the query range on the edge $e$, where $n_e$ is the number of events on this edge.
\end{lemma}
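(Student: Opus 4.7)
The plan is to reduce the dual-tree traversal of Algorithm~\ref{algo:RFQ} to the standard canonical-decomposition analysis for range queries on a balanced binary tree. The key preliminary observation I would make is that all trees $T_1, T_2, \ldots, T_{n_e}$ in the range forest share an identical topology: the tree shape is fixed by the sorted spatial positions of events, while time-ordered insertions only refresh the aggregated vectors $\mathbf{A}(u)$ along the affected root-to-leaf path, never the ranges $R(u)$ or the parent-child links. Consequently, at every recursive call the pair $(u_l, u_r)$ tracked by \textsc{DualDetect} corresponds to a single structural node with a common range $R(u_l) = R(u_r)$, and the control flow depends only on the relationship between that range and the target range $R$ — exactly as in a single-tree range query.

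Having made this reduction, I would invoke the classical canonical-subset argument. Let $u^\ast$ be the lowest common ancestor of the two leaves whose ranges contain the endpoints of $R$. The recursion visits (i) the ancestors of $u^\ast$, at each of which one child lies entirely outside $R$ (pruned via the first branch) and the other straddles $R$ (recursed into); (ii) two root-to-leaf descents from $u^\ast$ toward the endpoints of $R$, each of depth $O(\log n_e)$; and (iii) the ``inside'' siblings of the nodes on those descents, each wholly contained in $R$ and thus returning immediately via the second branch. Summing the three contributions bounds the total number of visited nodes by $O(\log n_e)$.

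The per-node work is $O(1)$: a constant number of range comparisons, plus at worst a subtraction $\mathbf{A}(u_r) - \mathbf{A}(u_l)$ of two constant-dimension vectors (four entries in the Triangular case of Equation~\ref{eq:aggregate}), and the additions that combine the two recursive return values are likewise $O(1)$. Multiplying the $O(\log n_e)$ node count by the $O(1)$ per-node cost gives the claimed running time.

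The main obstacle — the only step that needs care beyond citing the textbook range-tree result — is justifying correctness of the dual traversal, namely that $\mathbf{A}(u_r) - \mathbf{A}(u_l)$ really equals the aggregation restricted to events whose insertion time lies in the query window and whose spatial position lies in $R(u)$. This follows from the persistence invariant that $\mathbf{A}(u)$ in $T_j$ sums exactly over events of insertion rank at most $j$ that fall in $R(u)$; the subtraction then telescopes to the desired window. Once this identity is in hand, the depth bound is identical to the single-tree case and the lemma follows.
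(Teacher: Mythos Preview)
Your argument is correct, but it takes a somewhat more general route than the paper. You invoke the full two-sided canonical-decomposition analysis (find the LCA $u^\ast$, then descend along two paths), which handles an arbitrary interval $R$. The paper instead exploits the fact that here the spatial range is always a prefix $R = [0, B]$: because the left endpoint is $0$, at most one node per level can be partially covered, and hence at most two nodes per level are ever visited. In your language, since the left endpoint of $R$ lies to the left of every leaf, the LCA $u^\ast$ is always the root and only the single right-side descent materializes; your general bound then specializes to the paper's sharper ``one partially covered node per layer'' count. Your analysis buys generality (it would cover arbitrary spatial windows without change), while the paper's buys simplicity. One small point you omit that the paper does include: the $O(\log n_e)$ binary search needed to identify the indices $l, r$ of the relevant trees $T_{l-1}, T_r$ from the query time window before \textsc{DualDetect} is invoked; this is absorbed into the same $O(\log n_e)$ bound.
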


\begin{proof}
	Computing the temporal range $[T_l, T_r]$ requires a binary search costing $O(\log{n_e})$. As for the \textit{DualDetect} function, each node has three cases: (1) not covered~(Line 4), (2) fully covered~(Line 6), and (3) partially covered~(Line 8). For a partially covered node $u$, if $lc(u)$ is partially covered, $rc(u)$ is not covered. If $rc(u)$ is partially covered, $lc(u)$ is fully covered. Therefore, each layer of the range tree has at most $1$ partially covered node and only $2$ nodes will be accessed. Since the range tree has a balanced structure and the depth is $\log{n_e}$, the query time complexity in Algorithm~\ref{algo:RFQ} is $O(\log{n_e})$.
\end{proof}

\subsection{Construction of the Range Forest}

	If we construct the whole range forest as Figure \ref{fig:RFS1}, the memory consumption is not affordable. An observation reveals that two adjacent range trees exhibit significant structural similarities with a large number of shared nodes (transparent nodes), differing only in $O(\log n_e)$ nodes from the root to the leaf highlighted in blue.
	Consequently, we only need to update these new nodes and link other old nodes to the previous ones, as shown in Figure \ref{fig:RFS3}. This sharing structure optimization only modifies node references without altering the underlying logical relationships.  As a result, the query process in Algorithm \ref{algo:RFQ} remains unchanged.
	
	\begin{figure}[h!]\centering
		\scalebox{0.33}[0.33]{\includegraphics{./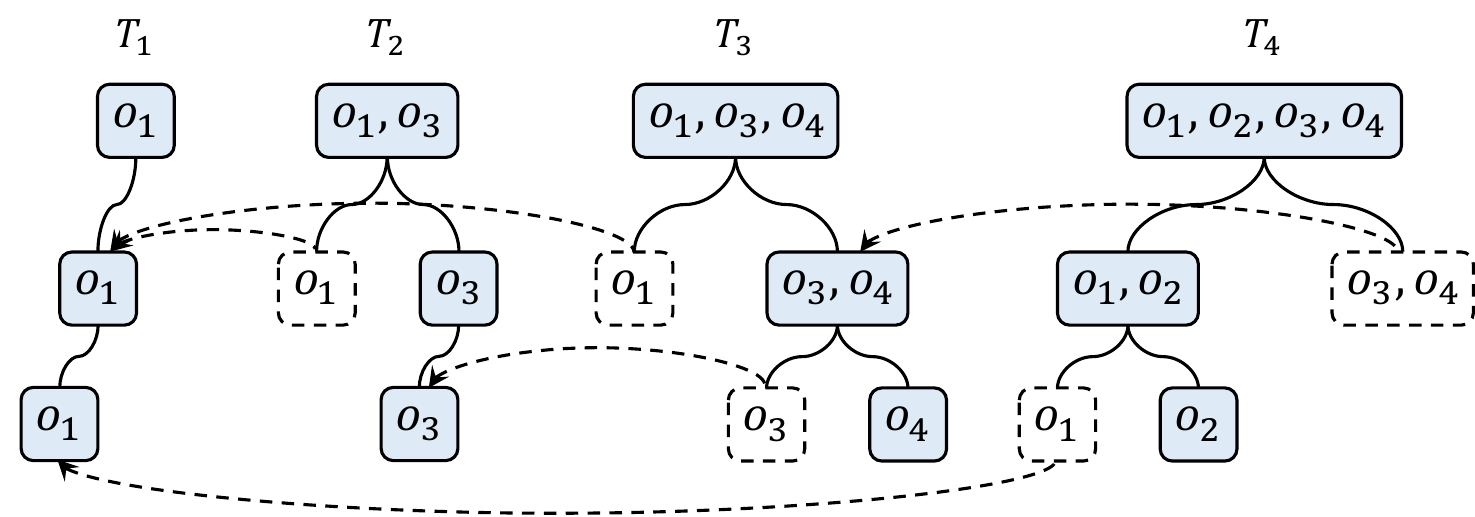}}
		\vspace{-1em}
		\caption{A shared version of the range forest. Only updated nodes (marked by blue) will be accessed and other unchanged nodes are linked to the previous range tree.}
		\label{fig:RFS3}
	\end{figure}

	Algorithm \ref{algo:RFC} describes the process of constructing an optimized range forest. In Lines 2-4, some range trees $T_i$ are constructed by the time order. In each construction, nodes are dynamically created in Line 6, while the aggregated vector is updated in Line 7. If $o_i$ will appear in the left child node $lc(u_r)$, i.e., $d(v_c, p_i) \in R(lc(u_l))$, the right child node will link to the previous one in Line 9. The opposite case follows the same procedure.

\begin{lemma}
	Algorithm \ref{algo:RFC} constructs a range forest with shared tree nodes in $O(n_e\log{n_e})$ time and space.
\end{lemma}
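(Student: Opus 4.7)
The plan is to decompose the cost of Algorithm~\ref{algo:RFC} into a one-time setup cost and a per-insertion cost, then bound both by $O(n_e \log n_e)$ using the balanced structure of the range tree.

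First I would preprocess all $n_e$ events on edge $e$ by sorting them according to their spatial offset $d(v_c, p_i)$. This fixes the tree topology: the root corresponds to the whole sorted range and each internal node is split at its median, so the resulting skeleton is a balanced binary tree of depth $O(\log n_e)$ with $O(n_e)$ nodes. The sort itself costs $O(n_e \log n_e)$ time. It is convenient to think of Algorithm~\ref{algo:RFC} as starting from the empty range tree $T_0$ that has this skeleton but zero aggregated vectors at every node.

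The core step is to show that each of the $n_e$ time-ordered insertions only touches $O(\log n_e)$ nodes. When event $o_i$ is inserted, it belongs to exactly one leaf of the skeleton, and the only nodes whose aggregated vector $\mathbf{A}(u)$ changes are the ancestors of that leaf. At every node $u_r$ on this root-to-leaf path the algorithm allocates a fresh node and computes $\mathbf{A}(u_r) = \mathbf{A}(u_r') + \mathbf{a}(o_i)$, where $u_r'$ is the corresponding node in $T_{i-1}$ and $\mathbf{a}(o_i)$ is the constant-size contribution of $o_i$ (e.g.\ $d(v_c, p_i) t_i$, $d(v_c, p_i)$, $t_i$, and $1$ for the Triangular case in Equation~\ref{eq:aggregate}). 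The sibling that is \emph{not} on the path is left untouched: its child pointer is simply copied from $u_r'$ (the branch in Lines 8--9 of the algorithm). Deciding which child to descend into is a single comparison between $d(v_c, p_i)$ and the midpoint of $R(lc(u_r))$, so each level costs $O(1)$. Summing over the at most $\lceil \log_2 n_e \rceil + 1$ levels of the path gives $O(\log n_e)$ work and $O(\log n_e)$ newly allocated nodes per insertion.

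Multiplying by the $n_e$ insertions yields the advertised $O(n_e \log n_e)$ bound for both time and space, absorbing the $O(n_e \log n_e)$ presort and the $O(n_e)$ size of the shared skeleton. The one place that deserves care is the space accounting: I would explicitly argue that no edge in the forest is ever duplicated, because every pointer emitted by the algorithm is either to a freshly created node on the current path (contributing to its own $O(\log n_e)$ budget) or a reused pointer to a node in $T_{i-1}$ (contributing nothing new). This, together with the balanced-depth property of the skeleton, is the main technical obstacle; once it is established, correctness of the aggregated vectors follows by a straightforward induction on $i$, since the unshared ancestors receive the update and the shared subtrees already hold the correct aggregation for the events they cover.
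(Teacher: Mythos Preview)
Your proposal is correct and follows essentially the same argument as the paper: each of the $n_e$ insertions walks a single root-to-leaf path of length $O(\log n_e)$, allocating one new node per level and linking the untouched sibling, so the total time and space are $O(n_e\log n_e)$. Your version is considerably more detailed than the paper's two-sentence proof (you make explicit the spatial presort that fixes the balanced skeleton and give a careful space-accounting argument), but the underlying decomposition and key observation are the same.
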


\begin{proof}
	The shared version includes $n_e$ spatial range trees. In the dual construction, the algorithm will always recurse to one child and link the other child. This step will execute $\log{n_e}$ times for each tree, resulting in a total time and space complexity of $O(n_e\log{n_e})$.
\end{proof}

Applying the same technique as in Lemma~\ref{lemma:ADA}, we can determine the total query time and space complexity.

\begin{lemma}
	\label{lemma:RFS_1}
	The query time complexity of the Range Forest Solution is $O(\vert E \vert (T_{sp} + L \cdot \log\frac{N}{\vert E \vert}))$, while the construction time is at most $O(N \cdot \log{N})$.
\end{lemma}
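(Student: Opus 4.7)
The plan is to decompose both bounds edge-by-edge and then apply essentially the same accounting used for Lemma~\ref{lemma:ADA}. The two claims are largely independent: the query bound follows from combining Shortest Path Sharing with the per-edge query cost from the previous lemma, while the construction bound follows from summing the per-edge construction cost.

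For the query time, I would first observe that Algorithm~\ref{algo:framework} iterates over every (lixel, edge) pair, and the work inside one pair consists of (i) retrieving the shared shortest-path distances $d(q,v_c),d(q,v_d)$, and (ii) running \textit{Range Forest Query} to obtain the aggregated vector $\mathbf{A}$ for each of the (constantly many) aggregations $\Gamma$ on that edge. Step (i) is handled by SPS: on each edge $e=(v_a,v_b)$ we compute $d(v_a,v_c),d(v_a,v_d),d(v_b,v_c),d(v_b,v_d)$ once, costing $T_{sp}$ per edge and $|E|\cdot T_{sp}$ in total, independent of $L$. Step (ii) costs $O(\log n_e)$ per (lixel, edge) pair by the previous lemma on \textit{DualDetect}, so its total cost is
\begin{equation*}
L\cdot\sum_{e\in E}\log n_e.
\end{equation*}

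The main step is bounding this sum. Since $\sum_{e\in E} n_e = N$ and $\log(\cdot)$ is concave, Jensen's inequality (the AM--GM argument already used in Lemma~\ref{lemma:ADA}) gives
\begin{equation*}
\sum_{e\in E}\log n_e \;\le\; |E|\cdot\log\!\frac{N}{|E|}.
\end{equation*}
Adding the two contributions yields $|E|\cdot T_{sp} + L\cdot|E|\cdot\log(N/|E|) = O\!\bigl(|E|\,(T_{sp}+L\log(N/|E|))\bigr)$, as claimed. I do not expect any subtlety here beyond confirming that the shortest-path precomputation is genuinely shared across all lixels on a common edge (so it is not multiplied by $L$) and that a constant number of aggregations per edge is absorbed into the big-$O$.

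For the construction bound, I would invoke the per-edge construction lemma: building the shared range forest on edge $e$ takes $O(n_e\log n_e)$ time and space. Summing,
\begin{equation*}
\sum_{e\in E} n_e\log n_e \;\le\; \log N\cdot\sum_{e\in E} n_e \;=\; N\log N,
\end{equation*}
which gives the $O(N\log N)$ total. This is the easy half; the potentially tighter bound $\sum_e n_e\log(n_e)\le N\log(N/|E|)$ is also available via concavity, but the weaker form suffices for the statement. Overall, the only place that requires any real argument is the Jensen/AM--GM step in the query analysis; everything else is bookkeeping on top of previously established per-edge lemmas.
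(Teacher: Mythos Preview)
Your proposal is correct and follows essentially the same approach as the paper: both argue that the per-edge query cost $O(\log n_e)$ matches ADA, so the total query complexity is inherited directly from Lemma~\ref{lemma:ADA} (via the same AM--GM/Jensen step), and both bound the construction cost by summing the per-edge $O(n_e\log n_e)$ and using $n_e\le N$ to get $O(N\log N)$. Your write-up is simply more explicit than the paper's two-line proof, which just points back to Lemma~\ref{lemma:ADA} rather than re-deriving the concavity bound.
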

\begin{proof}
	The query time on each edge is both $\log{n_e}$ for ADA and RFS. Therefore, they share the same time complexity. The constructing time on each edge $e$ is $O(n_e \log{n_e})$, so the worst case will cost $O(N \cdot \log{N})$ time to construct a range forest contains all events.
\end{proof}

\begin{algorithm}[t]
	\caption{Range Forest Construction}
	\label{algo:RFC}
	\DontPrintSemicolon
	\KwIn{event set $\{o_i\}$ on the edge $e$}
	\KwOut{range forest $T_i$}

	sort $\{o_i\}$ by time $t_i$
	
	\For{$i \in [1, n_e]$}{
		$u_l \leftarrow root(T_{i-1})$
	
		$root(T_i) \leftarrow DualConstruct(u_l, o_i)$
	}
		
	\SetKwFunction{DC}{DualConstruct}
	\SetKwProg{Fn}{Function}{:}{}
	\Fn{\DC($u_l, o_i$)}{
	
	create a new tree node $u_r$
	
	$\mathbf{A}(u_r) \leftarrow \mathbf{A}(u_l) + \mathbf{A}(o_i)$
		 
	\uIf{$d(v_c, p_i) \in R(lc(u_r))$}{
		link $rc(u_r)$ to $rc(u_l)$
		
		$lc(u_r) \leftarrow DualConstruct(lc(u_l), o_i)$
	}
	\Else{
		link $lc(u_r)$ to $lc(u_l)$
		
		$rc(u_r) \leftarrow DualConstruct(rc(u_l), o_i)$
	}
	
	\Return{$u_r$}
	}
		
\end{algorithm}

\section{Streaming Update}
\label{sec7:dynamic}

	The Range Forest Solution achieves the same time complexity to support temporal queries. However, the structure of the range forest is fixed, requiring loading all events first to determine the structure initially.

	To support the streaming update, wherein events continuously arrive with the time order, it is necessary to replace the static range forest with a dynamic structure. Shortly, each tree node will maintain a specific segment of the edge containing an arbitrary number of events.

\subsection{Dynamic Range Forest Solution}

	The main structure of RFS is the range tree so we first illustrate how to make a range tree dynamic. Silimar to the static range tree, each dynamic range tree node $u$ also has a range $R(u)$. Now the range of the root node is $R(root) = [0, d(v_c, v_d)]$. Then, a tree node with range $R(u) = [l, r]$ will spilt into two child nodes where $R(lc(u)) = [l, (l + r) / 2]$ and $R(rc(u)) = [(l + r) / 2 + \delta, r]$. Here $\delta$ is a tiny value to separate two intervals. In short, the split of the range tree depends on the real but not relative positions.

	Figure \ref{fig:DRFS_1} shows an example of a dynamic range forest, where $\{o_1\}, \{o_2\}, \{o_3, o_4\}$ is located at the first, third, fourth quarter section of the edge, respectively. Algorithm~\ref{algo:RFQ} and Algorithm~\ref{algo:RFC} are also suitable for the query and construction of the dynamic range forest, respectively, with the range $R(u)$ modified.

\begin{figure}[h!]\centering
	\scalebox{0.32}[0.32]{\includegraphics{./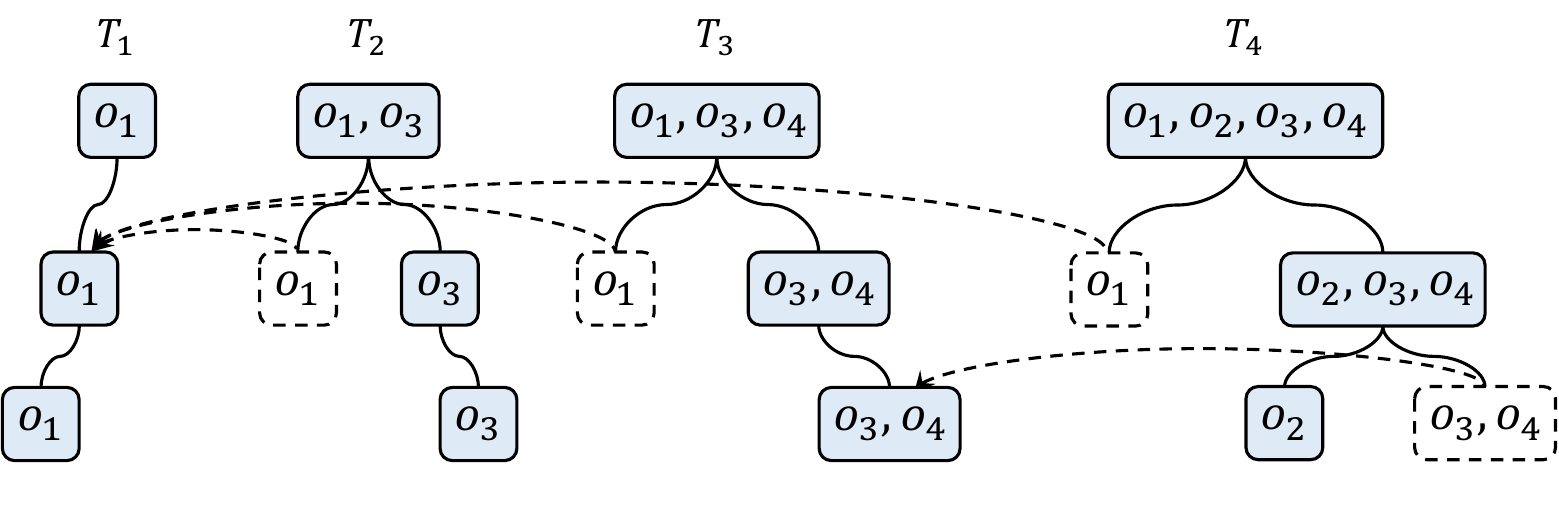}}
	\vspace{-2em}
	\caption{An example of dynamic range forest . Now the range forest is split based on the real position distance.}
	\vspace{-1em}
	\label{fig:DRFS_1}
\end{figure}

	The dynamic range forest may not be exact. In Figure~\ref{fig:RFS1}, each leaf node of the static range forest contains only one event. Therefore, RFS can always terminate at one leaf node and retrieve the exact aggregation of a query range. However, Figure~\ref{fig:DRFS_1} shows a leaf node with two event $\{o_3, o_4\}$. If this node is partially covered, the process will try to recurse to a deeper layer, which is of course an illegal operation.

	To overcome this problem, we propose the extension operation, which can extend the layer of the forest. For example, the $3$-layer structure is not enough for the situation in Figure~\ref{fig:DRFS_1}, and we try to extend the fourth layer from each leaf node. Figure~\ref{fig:DRFS_2} is the $4$-layer structure and the extended nodes are bounded by a rectangle.

\begin{figure}[h!]\centering
	\scalebox{0.30}[0.30]{\includegraphics{./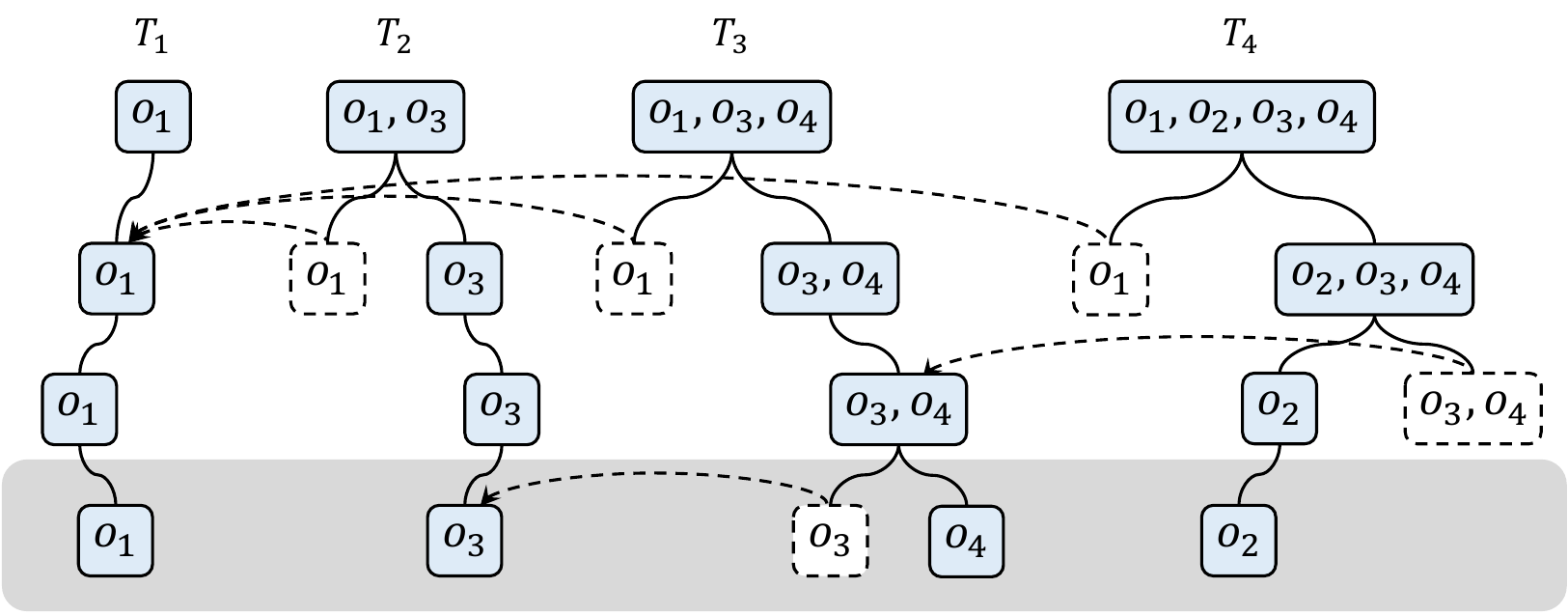}}
	\vspace{-1em}
	\caption{An example of dynamic range forest. The extended level is bounded by a rectangle.}
	\vspace{-1em}
	\label{fig:DRFS_2}
\end{figure}

	The extension operation can be regarded as a lazy evaluation and the implementation is shown in Algorithm~\ref{algo:DRFS}. We save the leaf node $leaf_i$ from the last construction. When expanding, the \textit{DualConstruct} function can be executed in continuing on each leaf node to extend a deeper layer. To recover the constructing process, only two stack variables $u_l$ and $o_i$ are required to be recorded.
	
	Therefore, we can pre-construct the range forest with a default depth $H$, forming a $H$-layer range forest. Then user can dynamically adjust the parameter $H$ based on their accuracy requirement and time limitation. Specifically, a larger $H$ will lead to higher accuracy but more time cost. Compared with constructing a $H$-layer range forest directly, this gradual construction is more flexible and will not add extra cost.

	\begin{lemma}
		The query time complexity of the Dynamic Range Forest Solution (DRFS) is $O(\vert E \vert (T_{sp} + L \cdot H))$ where the forest depth is $H$. The construction time and space complexity is both $O(N \cdot H)$ whether constructed directly or dynamically.
	\end{lemma}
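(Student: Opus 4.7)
The plan is to mirror the proof structure of Lemma~\ref{lemma:RFS_1}, replacing the balanced-depth bound $\log{n_e}$ by the fixed depth $H$ wherever the analysis recurses through the range tree, and then check separately that the lazy extension scheme does not inflate the work beyond direct construction.

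For the query side, I would first argue that Algorithm~\ref{algo:RFQ} is still correct on the dynamic range tree once $R(u)$ is reinterpreted as a real-position interval. Then, repeating the case analysis in the proof of Lemma~\ref{lemma:RFS_1}, the \textit{DualDetect} recursion still has the property that on each level at most one node is partially covered, so at most two nodes are visited per level; the only change is that the tree now has depth $H$ rather than $\log{n_e}$. Hence a single edge contributes $O(H)$ to each lixel query after the two boundary trees $T_{l-1}, T_r$ are located (which, as in RFS, piggybacks on the shared temporal indexing and is dominated by $H$ whenever $H \ge \log n_e$, the regime of interest). Summing over the $L$ lixels and $|E|$ edges, and adding the one-time $T_{sp}$ per edge for the shared shortest-path computation from the endpoints, yields $O(|E|(T_{sp} + L \cdot H))$.

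For construction, whether direct or dynamic, I would observe that a single event insertion on edge $e$ follows exactly one root-to-leaf path of length $H$ in Algorithm~\ref{algo:RFC} (\textit{DualConstruct} always recurses into one child and links the other). Because of the persistent, path-copying design of Section~4.3, inserting an event creates $O(H)$ new tree nodes and performs $O(H)$ work; the remaining siblings are shared by reference. Summing over all $N$ events gives $O(N \cdot H)$ time and $O(N \cdot H)$ space.

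The main obstacle is showing that the dynamic (lazy) construction does not cost more than the direct one, since an intuitive reading might suggest that re-entering a leaf to extend it redoes work for every event that previously fell into it. Here I would lean on the saved $(u_l, o_i)$ stack variables described just before the lemma: each event deposits exactly one such record at the frontier where its \textit{DualConstruct} was suspended, and an extension by one level simply resumes that suspended call for one more recursive step per event. Thus the total work to realize the $H$-layer forest is, summed over events and levels, still $O(N \cdot H)$, matching the direct construction; and because extensions only ever add new nodes (never revisit or copy ones that were not already on an insertion path), the space bound is likewise $O(N \cdot H)$.
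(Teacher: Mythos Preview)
Your proposal is correct and follows essentially the same approach as the paper: replace the depth $\log n_e$ by $H$ in the query analysis of Lemma~\ref{lemma:RFS_1}, and use the path-copying property of \textit{DualConstruct} to charge $O(H)$ per event for construction, with the lazy extension amortizing to $O(n_e)$ per new layer just as you describe via the saved frontier records. The paper's proof is terser (it simply asserts the substitution and the identity $O(n_e\cdot H)+O(n_e)=O(n_e\cdot(H{+}1))$), but the underlying argument is the same as yours.
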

	\begin{proof}
		Following Lemma~\ref{lemma:RFS_1}, we replace the tree depth $\log{n_e}$ with $H$ to get the query time complexity. Constructing an $H$-layer dynamic range forest directly costs $O(n_e \cdot H)$, while extending a new layer using Algorithm \ref{algo:DRFS} takes $O(n_e)$. Since $O(n_e \cdot H) + O(n_e) = O(n_e \cdot (H{+}1))$, Algorithm \ref{algo:DRFS} does not incur any extra cost and the total construction time and space is both $O(N \cdot H)$.
	\end{proof}

\subsection{Quantization}

	The index size is highly sensitive for real-world applications. A relatively smaller index is preferable as long as the accuracy is acceptable. An observation is that the query time $O(H)$ is a strict bound. In practice, the query process often terminates earlier when no tree node is partially covered. Thereby, a relatively small $H$ can yield sufficiently accurate results. This compressing operation is referred to as quantization.

	The implementation of quantization involves strategically \textit{hiding} some deeper layers. Given a quantized depth $H_0 < H$, the query process is designed to always terminate at the $H_0$-th layer. In cases where the terminated node is partially covered, the returned value is a zero-vector.

	The accuracy of quantization is directly determined by the quantized depth $H_0$. A reduction in $H_0$ can significantly decrease both time and space consumption. In exchange, leaf nodes will contain more events, leading to more partially covered cases potentially. In our experiments, even $H_0=2$ achieves more than $90\%$ accuracy compared with the exact results and costs only about $2^{H_0}=4$ times the memory of the original data, revealing the high scaling ability of the quantization.

\begin{algorithm}[t]
	\caption{Extension Operation}
	\label{algo:DRFS}
	\DontPrintSemicolon
	\KwIn{a range forest, event set $\{o_i\}$}
	\KwOut{a range forest with an extended layer}
	
	\For{each event $o_i$}{

		$u \leftarrow$ last updated node $leaf_i$
		
		call DualConstruct on $u$ to extend a new layer

		save new updated node $leaf_i$
	}

\end{algorithm}
\section{Lixels Sharing}
\label{sec6:LixelsAggregaion}

The range forest, as well as its dynamic variant, exhibits an efficient performance on a single query, while multiple queries among different lixels are still independent, which is another bottleneck of these KDE-based methods. In this section, we introduce an optimization specialized for the polynomial kernel functions called \textit{Lixels Sharing} that can share density values between different lixels.

The massive computation originates from the various query ranges and frequent query processes on the index. Intuitively, if the query range is sufficiently large~(such as the example in Figure~\ref{fig:LS_1}), the aggregation will include all events at the root node of the range forest.

\begin{figure}[h]\centering
    \scalebox{0.6}[0.6]{\includegraphics{./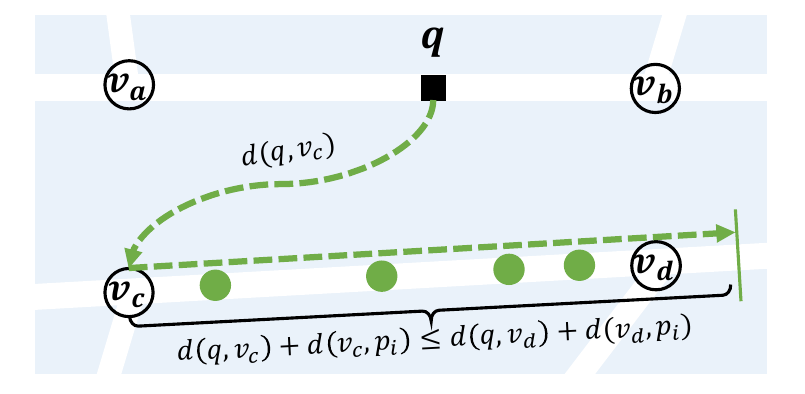}}
    \centering
    \vspace{-1em}
    \caption{An example of dominated edges. The shortest paths from $q$ to all events will throught $v_c$.}
    \label{fig:LS_1}
\end{figure}

\subsection{Domination Determination}

If all lixels on the edge $(v_a, v_b)$ follow this situation, i.e., the query range covers the whole edge $(v_c, v_d)$, we call it \textit{dominated}. For example, edge $(v_c, v_d)$ is dominated at $v_c$ in Figure~\ref{fig:LS_1} since the shortest paths from all lixels $q_i$ to all events will through $v_c$. Formally, for all lixels $q_i \in (v_a, v_b)$ and events $p_i \in (v_c, v_d)$, the domination has two conditions:
\begin{equation}
\label{eq:LS_1}
\begin{aligned}
    d(q_i, v_c) + d(v_c, p_i) &\le b_s \\
    d(q_i, v_c) + d(v_c, p_i) &\le d(q_i, v_d) + d(v_d, p_i) \\
\end{aligned}
\end{equation}

The first condition claims all events are within the spatial bandwidth. Considering the worse case, the maximum $d(v_c, p_i)$ is obviously the whole edge length $d(v_c, v_d)$, while the maximum $d(q_i, v_c)$ is half of the length of the loop $v_c {\rightarrow} v_a {\rightarrow} v_b {\rightarrow} v_c$, i.e., $C / 2$ where $C = d(v_c, v_a) + d(v_a, v_b) + d(v_b, v_c)$ is the loop length.

The second condition claims all events are closer to $v_c$ than $v_d$, which is equivalent to:
\begin{equation*}
    \max_{q_i}\{d(q_i, v_c) - d(q_i, v_d)\} \le \min_{p_i}\{d(v_d, p_i) - d(v_c, p_i)\}
\end{equation*} 
The right-hand side is constant since the expression has monotonicity and $p_{n_e}$ achieves the minimum value. However, the left part is unaffordable to compute one-by-one. Actually, we have the following observation:

\begin{lemma}
    \label{lemma:LS_1}
    The maximum of $d(q_i, v_c) - d(q_i, v_d)$ appears at $4$ positions at most.
\end{lemma}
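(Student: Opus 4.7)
The plan is to parameterize the lixel $q_i$ by its arclength along the edge $(v_a, v_b)$ and express the function $d(q_i, v_c) - d(q_i, v_d)$ as a piecewise linear function of this parameter, then argue about where the maximum of a piecewise linear function on a closed interval can occur.

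Concretely, let $L = d(v_a, v_b)$ and let $x \in [0, L]$ denote the arclength of $q_i$ measured from $v_a$, so that $d(q_i, v_a) = x$ and $d(q_i, v_b) = L - x$. Since $q_i$ lies in the interior of the edge $(v_a, v_b)$, any shortest path from $q_i$ to $v_c$ must exit through $v_a$ or $v_b$, and by the Shortest Path Sharing principle used earlier,
\begin{equation*}
    d(q_i, v_c) = \min\bigl(x + d(v_a, v_c),\; (L - x) + d(v_b, v_c)\bigr),
\end{equation*}
and analogously for $d(q_i, v_d)$. Each of these is the minimum of two affine functions of $x$ with slopes $+1$ and $-1$, hence a piecewise linear concave function with at most one breakpoint in $[0, L]$, namely at $x = (L + d(v_b, v_c) - d(v_a, v_c))/2$ for $v_c$ and at the analogous point for $v_d$.

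The difference $f(x) = d(q_i, v_c) - d(q_i, v_d)$ is then piecewise linear with at most two breakpoints in $[0, L]$, giving at most three linear pieces. On each piece, the slope is the difference of slopes drawn from $\{+1, -1\}$, so it lies in $\{-2, 0, +2\}$; in particular $f$ is affine and therefore monotone (or constant) on each piece. The maximum of such a function on a closed interval must be attained at an endpoint of some piece, that is, at $x = 0$, at $x = L$, or at one of the at most two internal breakpoints. This gives at most $4$ candidate positions, proving the claim.

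The main thing to be careful about is correctly counting the breakpoints and ruling out further interior maxima. I expect no real obstacle once the parameterization is set up, since the slopes $\pm 1$ arising from the SPS expression cleanly bound the number of pieces; the only subtlety is handling the degenerate case where a piece has slope $0$ and the maximum is realized on an entire subinterval, but even then the 4 listed candidate points include a global maximizer, so the lemma holds.
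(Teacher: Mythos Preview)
Your argument is correct and matches the paper's approach: both identify the piecewise-linear structure of $d(q_i,v_c)-d(q_i,v_d)$ with two breakpoints and the slope pattern $0,\pm 2,0$, so the maximum sits at a breakpoint. The only cosmetic difference is that the paper works discretely with arithmetic sequences over lixel indices and names the four candidates as $k,\,k{+}1,\,k',\,k'{+}1$ (the lixels adjacent to the two breakpoints), while you use the continuous arclength and list $\{0,L,b_1,b_2\}$; since the outer pieces are flat, these enumerations are equivalent.
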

\begin{proof}
    Shown in Figure~\ref{fig:LS_2}, The route of $d(q_i, v_c)$ passes $v_a$ first and then passes $v_b$. Therefore, the values of $d(q_i, v_c)$ is always increasing from $q_1$ to $q_k$, and then decreasing from $q_{k+1}$ to $q_{l_e}$, where $k$ is a break point and $l_e$ is the number of lixels on the edge $(v_a, v_b)$. The common difference is $g$ and $-g$, respectively.

    The $v_d$ side has a similar situation. The values of $d(q_i, v_d)$ is increasing from $q_1$ to $q_{k'}$, and then decreasing from $q_{k'+1}$ to $q_{l_e}$, where $k'$ is another break point. The common difference is also $g$ and $-g$, respectively.

    Now $d(q_i, v_c) - d(q_i, v_d)$ is the subtraction of two arithmetic sequences. From $q_1$ to $q_{\min\{k,k'\}}$ and from $q_{\max\{k,k'\}+1}$ to $q_{l_e}$, the common difference is both $g$ and $-g$, respectively. As a result, $d(q_i, v_c) - d(q_i, v_d)$ is constant. From $q_{\min\{k,k'\}+1}$ to $q_{\max\{k,k'\}}$, the common difference is reversed, so the result is also an arithmetic sequence, where the maximum value is one of the endpoints.

    Finally, at most $4$ positions are required to check, $k$, $k+1$, $k'$, and $k'+1$.
\end{proof}

\subsection{Domination Computation}

Using Lemma \ref{lemma:LS_1}, we can quickly determine whether an edge is dominated. If an edge is dominated, the aggregated vector $\mathbf{A}$ in Equation (\ref{eq:aggregate}) will be the same for all lixel queries. Thus, the only variable in Equation (\ref{eq:aggregate}) is $d(q, v_c)$. Due to the Triangular kernel function, $d(q, v_c)$ is a linear variable, i.e.,
\begin{equation*}
    F_e(q_i) = \alpha \cdot d(q_i, v_c) + \beta
\end{equation*}
Moreover, the density value difference between two adjacent lixels $q_i$ and $q_{i-1}$ is:
\begin{equation*}
    F_e(q_i) - F_e(q_{i-1}) = \alpha \cdot (d(q_i, v_c) - d(q_{i-1}, v_c)) = \alpha \cdot \Delta(q_i)
\end{equation*}

\begin{figure}[t]\centering
    \scalebox{0.6}[0.6]{\includegraphics{./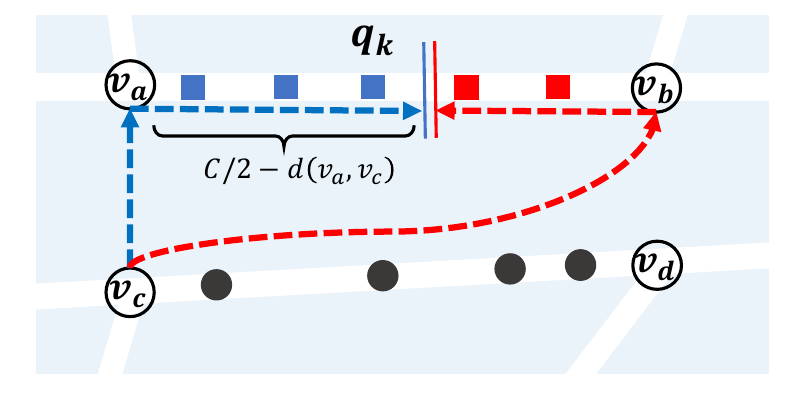}}
    \centering
    \vspace{-1em}
    \caption{An example of dominated edges. The shortest paths from $v_c$ to all lixels have two possible routes~(passing $v_a$ or $v_b$).}
    \label{fig:LS_2}
\end{figure}

Lemma \ref{lemma:LS_1} has illustrated that the $d(q_i, v_c)$ consists of two arithmetic sequences breaking at $k$. Then the first-order difference $\Delta(q_i)$ consists of two constant sequences, and the second-order difference $\Delta^2(q_i)$ are all zero. Figure~\ref{fig:LS_3} visualizes this idea. The density values $F_e(q_i)$ can be replaced by two updates on the second-order difference.

\begin{figure}[h!]\centering
    \scalebox{0.39}[0.39]{\includegraphics{./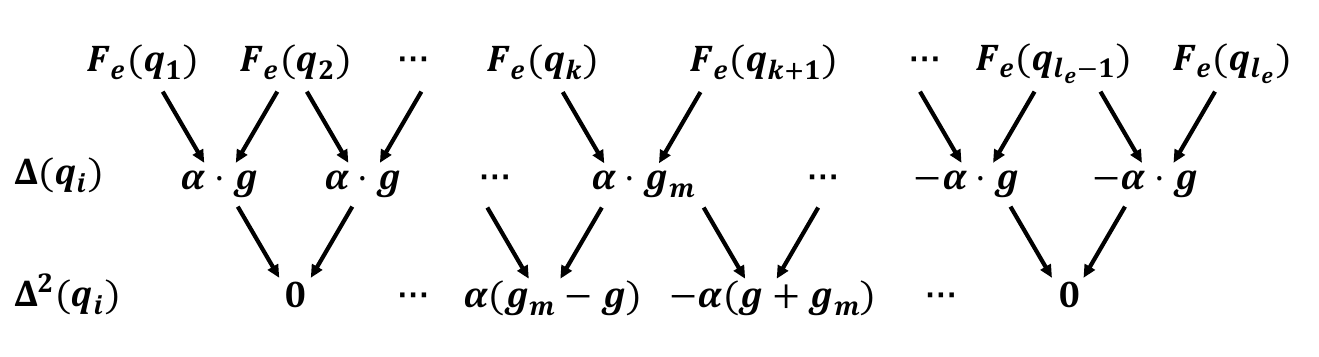}}
    \caption{ The first and the second-order difference $\Delta(q_i)$ and $\Delta^2(q_i)$, respectively. $\Delta^2(q_i)$ are all zero except two items near the middle point.}
    \label{fig:LS_3}
\end{figure}

For different dominated edges, we can easily update more values over $\Delta^2(q_i)$. Then following the rule that the differences of summation equals the summation of differences, these density values can be recovered from $\Delta(q_i)$ and $\Delta^2(q_i)$.

Finally, we present how to locate the break point $k$. As shown in Figure~\ref{fig:LS_2}, $q_k$ is the farthest lixel that passing $v_a$ and we have:
\begin{equation*}
    k = \left\lceil \frac{C/2 - d(v_a, v_c)}{g} + 0.5 \right\rceil
\end{equation*}
Here the additional $0.5$ means the middle point of the lixel $q_k$ must be within the boundary. Once $k$ is determined, we have:
\begin{equation*}
    g_m = d(q_{k+1}, v_c) - d(q_k, v_c)
\end{equation*}

\subsection{Out-of-the-bandwidth Determination}

The example in Figure~\ref{fig:LS_1} shows that if the aggregation contains all events, we can optimize this process. On the contrary, if the aggregation contains no event, we can also skip this edge for all lixels. Specifically, we have the following determining condition:
\begin{equation*}
\begin{aligned}
    d(q_i, v_c) + d(v_c, p_i) > b_s \\
    d(q_i, v_d) + d(v_d, p_i) > b_s
\end{aligned}
\end{equation*}
i.e., all events are out of the spatial bandwidth. Considering the worse case, $d(v_c, p_i) = d(v_d, p_i) = 0$. The minimum value of $d(q_i, v_c)$ and $d(q_i, v_d)$ only appears at the endpoint $q_1$ or $q_{l_e}$ since they are always increasing first and then decreasing according to Lemma~\ref{lemma:LS_1}. Therefore, if these two lixels are too far, the edge $(v_c, v_d)$ can be directly skipped.

\subsection{Lixel Sharing Framework}

Algorithm \ref{algo:LixelSharing} is the optimized framework using the Lixel Sharing. First, Lines 3-5 finds the dominated edge set $E_d$ and out-of-bandwidth edge set $E_o$, while the remaining edges are saved in $E_q$. For dominated edges, update the second order difference $\Delta^2(q_i)$ in lines 6-7 and recover the density values in line 8. Lines 9-12 will compute density values for the remaining edges with original query processing. 

\begin{algorithm}[t]
	\caption{Lixel Sharing Framework}
	\label{algo:LixelSharing}
	\DontPrintSemicolon
	\SetKwComment{comment}{$\triangleright$ }{}
	
    \For{each edge $(v_a, v_b)$}{
        Get shared shortest path distance \\
        $E_d \leftarrow$ dominated edges \\
        $E_o \leftarrow$ out-of-bandwidth edges \\
        $E_q \leftarrow E \backslash (E_d \cup E_o$) \\
        \For{each edge $e \in E_d$}{
            update $\Delta^2(q_i)$
        }
        recover $F(q_i)$ from $\Delta(q_i)$ and $\Delta^2(q_i)$
        
        \For{each lixel $q_i \in (v_a, v_b)$}{
            \For{each edge $e \in E_q$}{
                compute $F_e(q_i)$ \\
                $F(q_i) \leftarrow F(q_i) + F_e(q_i)$
            }
	    }
    }
\end{algorithm}

\begin{lemma}
    The Lixel Sharing Framework shown in Algorithm \ref{algo:LixelSharing} cost $O(\vert E \vert \cdot T_{sp} + \vert E \vert^2 + L \cdot \vert E_q \vert \cdot \log \frac{N}{E_q})$ time complexity.
\end{lemma}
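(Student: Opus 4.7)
The plan is to decompose the runtime of Algorithm~\ref{algo:LixelSharing} into three parts that match the three terms in the claimed bound, and to bound each separately. I will reuse the shortest-path-sharing argument from Lemma~\ref{lemma:ADA}, the range-forest query bound from Section~\ref{subsec:RFS}, and Lemma~\ref{lemma:LS_1} to shrink the domination check to constant work per edge pair.

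First, I would handle the shortest-path precomputation in Line~2. Exactly as in Lemma~\ref{lemma:ADA} and Lemma~\ref{lemma:RFS_1}, SPS shares the shortest-path distances among all lixels on a common edge, contributing $O(|E| \cdot T_{sp})$ in total across the outer loop.

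Second, I would bound the classification of edges into $E_d$, $E_o$, and $E_q$ in Lines~3--5. For a fixed query edge $(v_a, v_b)$ and any candidate edge $(v_c, v_d)$, the two domination conditions in Equation~(\ref{eq:LS_1}) reduce to comparing a constant number of precomputed quantities: the worst-case right-hand side $b_s - d(v_c, p_{n_e})$ and the quantity $\max_i \{d(q_i, v_c) - d(q_i, v_d)\}$. By Lemma~\ref{lemma:LS_1}, this maximum is realized among only four candidate lixel positions (namely $q_k, q_{k+1}, q_{k'}, q_{k'+1}$), so it is computed in $O(1)$ time once shortest paths are available. The out-of-bandwidth test is symmetric and likewise $O(1)$, since monotonicity of $d(q_i, v_c)$ and $d(q_i, v_d)$ along $(v_a, v_b)$ forces the minimum to occur at $q_1$ or $q_{l_e}$. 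Summed over the $|E|$ query edges and $|E|$ candidate edges, this classification costs $O(|E|^2)$.

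Third, I would separately account for the two inner computations. For each dominated edge $e \in E_d$, Lines~6--7 modify $\Delta^2(q_i)$ at only the two positions flagged around the break point $k$, so the updates cost $O(|E_d|) \le O(|E|)$ per query edge and $O(|E|^2)$ overall, already absorbed into the second term. Line~8 then reconstructs $F(q_i)$ by two prefix-sum sweeps across the $l_e$ lixels of $(v_a, v_b)$, totalling $O(L)$ across all query edges. For each $e \in E_q$, the inner loop in Lines~10--12 performs one range-forest query per (lixel, edge) pair; each query costs $O(\log n_e)$ by the bound established in Section~\ref{subsec:RFS}. Summing over lixels and edges and applying AM--GM exactly as in Lemma~\ref{lemma:ADA} gives $L \cdot \sum_{e \in E_q} \log n_e \le L \cdot |E_q| \cdot \log(N / |E_q|)$. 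Adding the three contributions yields the stated bound.

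The main obstacle is the second step: a naive per-edge-pair domination check would scan all $l_e$ lixels and blow up to $O(|E|^2 \cdot l_{\max})$. The non-trivial ingredient that rescues the $O(|E|^2)$ bound is Lemma~\ref{lemma:LS_1}, which collapses the maximum of the piecewise-arithmetic sequence $d(q_i, v_c) - d(q_i, v_d)$ to a constant-sized candidate set; verifying that the break-point argument and the out-of-bandwidth monotonicity argument are compatible with the formulas used in Lines~3--5 is the only place where care is required.
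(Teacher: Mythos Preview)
Your proposal is correct and follows essentially the same decomposition as the paper: shortest-path sharing gives the $O(|E|\cdot T_{sp})$ term, the per-edge-pair classification (justified via Lemma~\ref{lemma:LS_1}) together with the $\Delta^2$ updates give the $O(|E|^2)$ term, and the RFS queries over $E_q$ combined with AM--GM give the third term, with the $O(L)$ recovery cost absorbed. If anything, you are more explicit than the paper in invoking Lemma~\ref{lemma:LS_1} to justify the $O(1)$ domination check per edge pair.
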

\begin{proof}
    Line 2 costs $O(\vert E \vert \cdot T_{sp})$ in shortest path algorithm and lines 7-10 costs $O(L \cdot (\vert E_q \vert \log \frac{N}{\vert E_q \vert}))$ using RFS, which is unchanged. Determining the domination and the bandwidth of all edges in Lines 3-4 will cost $O(\vert E \vert)$. The loop in lines 6-7 will cost $O(\vert E_d \vert)$. The recover operation in line 8 will cost $O(L)$ in total. Absorbing $O(L)$ item, the total time complexity is $O(\vert E \vert \cdot T_{sp} + \vert E \vert^2 + L \cdot \vert E_q \vert \cdot \log \frac{N}{E_q})$.
\end{proof}

The efficiency of Algorithm \ref{algo:LixelSharing} is determined by the size of $E_q$. In the worst case, $E_q = E$ and Algorithm \ref{algo:LixelSharing} has no improvement. As the reduction of $E_q$, the time complexity will rapidly decline.
\section{Non-Polynomial Kernel Functions}
\label{sec7:kernel}

As an example kernel function, the Triangular kernel function is widely applied in KDE-based problems since it can be divided into a query vector $\mathbf{Q}$ and an aggregated vector $\mathbf{A}$ shown in Equation (\ref{eq:aggregate}), as well as other polynomial kernel functions like the Epanechnikov kernel function. However, many non-polynomial kernel functions, such as the Exponential kernel function and the Cosine kernel function, do not have this feature. 

Recent works try to use a polynomial kernel function to approach the non-polynomial kernel function, such as two Linear functions \cite{chan_karl_2019} and the Quadratic function \cite{chan_quad_2020}. These methods achieve high accuracy and tight boundaries. However, they are not generic (1) because the parameters need to be carefully considered, and (2) the error always exists and will not converge.

To improve these issues, our framework can support many non-polynomial kernel functions by decomposing the density value into the production of a query vector and an aggregated vector:
\begin{equation}
	F_\Gamma(q)= \mathbf{Q}(q) \cdot \mathbf{A}(\Gamma)
\end{equation}

\subsection{Exponential Kernel Function}

We first present the division of the Exponential kernel function:
\begin{equation}
	K(x) = e^{-x}\notag
\end{equation}

To simplify the problem, We only apply it to the spatial kernel function, i.e., $K_s(x) = e^{-x}$ and $K_t(x) = 1$. Therefore, the variable $x$ in $K_s(x)$ equals to $d(q, p_i)/b_s$ and the density value will be:
\begin{equation}
\begin{aligned}
	F_{\Gamma}(q) = \sum_{o_i \in \Gamma} e^{-\frac{d(q, v_c) + d(v_c, p_i)}{b_s}} 
	= e^{-d(q, v_c)/b_s} \sum_{o_i \in \Gamma} e^{-d(v_c, p_i)/b_s} \notag
\end{aligned}
\end{equation}

Actually, the query vector and the aggregated vector have only one item:
\begin{equation*}
	\mathbf{Q}(q) = e^{-d(q, v_c)/b_s}
	\quad
	\mathbf{A}(\Gamma) = \sum_{o_i \in \Gamma} e^{-d(v_c, p_i)/b_s}
\end{equation*} 

\subsection{Cosine Function}

Another widely used kernel function is the Trigonometric kernel function, such as the Cosine kernel function:
\begin{equation}
	K(x) = \cos(x)\notag
\end{equation}

Since the Cosine function cannot be directly divided, the trigonometric rules are applied:
\begin{equation}
	\begin{aligned}
		&F_\Gamma(q) \\
		=& \sum_{o_i \in \Gamma} \cos\left(d(q, v_c)/b_s + d(v_c, p_i)/b_s\right) \\
		=& \sum_{o_i \in \Gamma} \left[\cos \frac{d(q, v_c)}{b_s} \cos \frac{d(v_c, p_i)}{b_s} - \sin \frac{d(q, v_c)}{b_s} \sin \frac{d(v_c, p_i)}{b_s}\right] \\
		=& \cos \frac{d(q, v_c)}{b_s} \sum_{o_i \in \Gamma} \cos \frac{d(v_c, p_i)}{b_s} 
		-  \sin \frac{d(q, v_c)}{b_s} \sum_{o_i \in \Gamma} \sin \frac{d(v_c, p_i)}{b_s}
	\end{aligned}\notag
\end{equation}

Therefore, the query vector and the aggregated vector both have two items:
\begin{equation*}
	\mathbf{Q}(q) = 
	\begin{bmatrix}
		\cos \frac{d(q, v_c)}{b_s} \\
		-\sin \frac{d(q, v_c)}{b_s}
	\end{bmatrix}^\top
	\quad
	\mathbf{A}(\Gamma) = 
	\begin{bmatrix}
		\sum_{o_i \in \Gamma} \cos \frac{d(v_c, p_i)}{b_s} \\
		\sum_{o_i \in \Gamma} \sin \frac{d(v_c, p_i)}{b_s}
	\end{bmatrix}
\end{equation*}

\subsection{Multi-Kernel Function}

Previously, we only applied these non-polynomial kernel functions on $K_s(x)$ and $K_t(x)$ remains constant. Moreover, $K_s(x)$ and $K_t(x)$ can be the combination of arbitrary functions mentioned above:
\begin{equation}
\begin{aligned}
	F_\Gamma(q) &= \Big[\mathbf{Q}_s(q) \cdot \mathbf{A}_s(\Gamma)\Big] \cdot \Big[\mathbf{Q}_t(q) \cdot \mathbf{A}_t(\Gamma)\Big] \\
	&=\left(\sum_i \mathbf{Q}_i(q) \cdot \mathbf{A}_i(\Gamma)\right) \cdot \left(\sum_j \mathbf{Q}_j(q) \cdot \mathbf{A}_j(\Gamma)\right) \\
	&=\sum_{i,j} \mathbf{Q}_{ij}(q) \cdot \mathbf{A}_{ij}(\Gamma)
\end{aligned}
\end{equation}
where $\mathbf{Q}_{ij}(q)=\mathbf{Q}_i(q) \cdot \mathbf{Q}_j(q)$ and $\mathbf{A}_{ij}(\Gamma)=\mathbf{A}_i(\Gamma) \cdot \mathbf{A}_j(\Gamma)$. Therefore, users can define the spatial and the temporal kernel functions separately and arbitrarily. The size of the final aggregated vector $\vert \mathbf{A}_{ij} \vert$ is up to $\vert \mathbf{A}_i \vert \cdot \vert \mathbf{A}_j \vert$, which is still $O(1)$ in practice.

%
%
%

\begin{figure*}[t!]\centering
\hspace{-8pt}
\scalebox{0.3}[0.3]{\includegraphics{./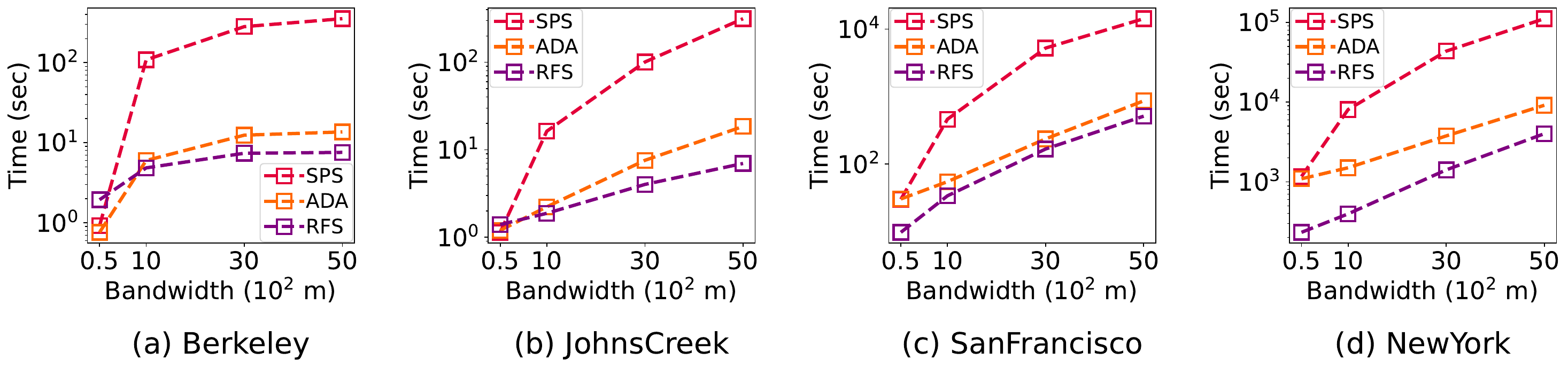}}
\vspace{-8pt}
\caption{Processing time with different spatial bandwidths, varying in 50m, 1000m, 3000m, 5000m.}
\label{exp1.1}

\vspace{10pt}
\scalebox{0.3}[0.3]{\includegraphics{./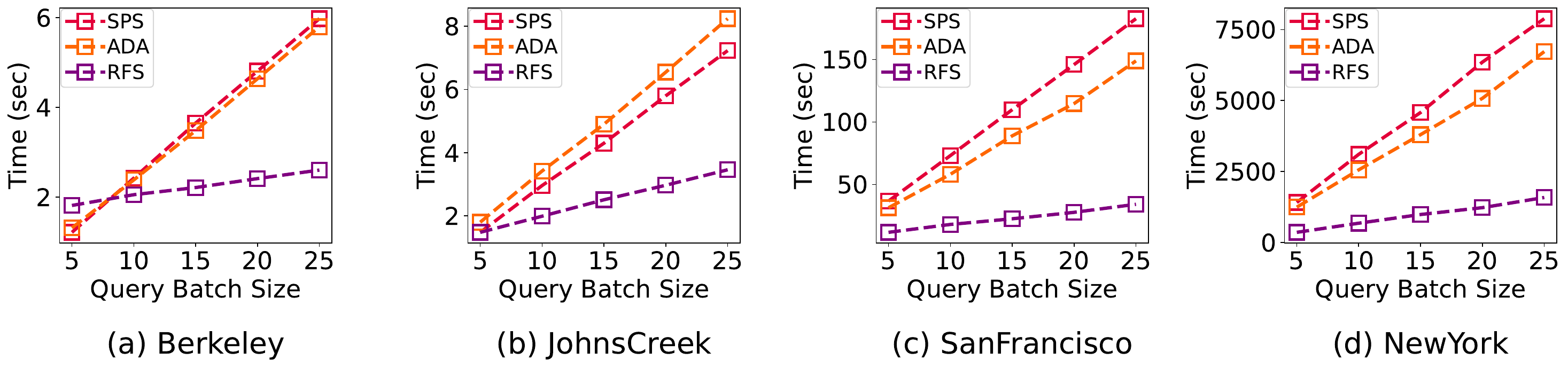}}
\vspace{-8pt}
\caption{Processing time with different query batch sizes, varying in 5, 10, 15, 20, 25 queries.}
\label{exp1.2}

\vspace{10pt}
\scalebox{0.3}[0.3]{\includegraphics{./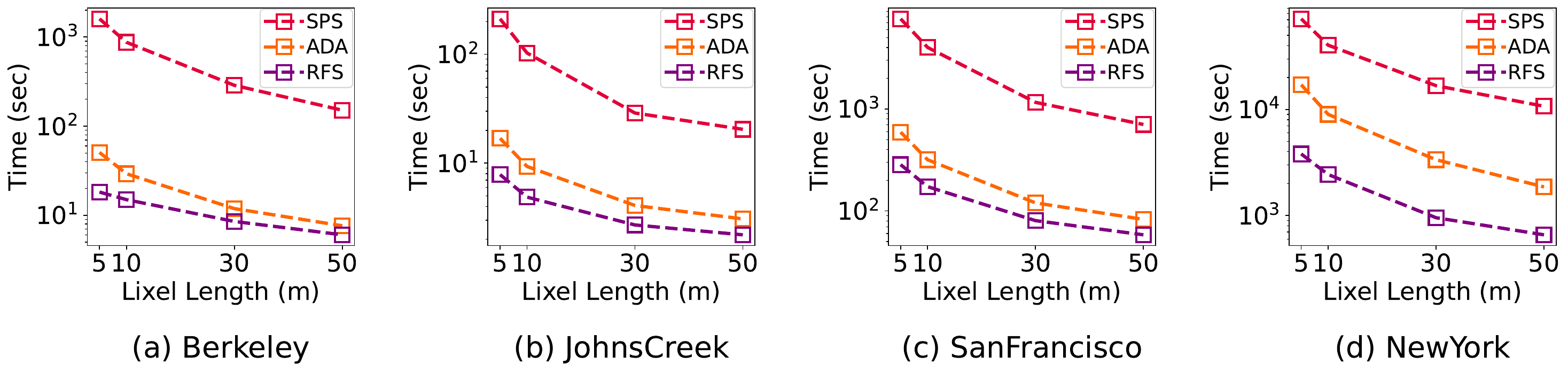}}
\vspace{-8pt}
\caption{Processing time with different lixel lengths, varying in 5m, 10m, 30m, 50m.}
\label{exp1.3}

\vspace{10pt}
\scalebox{0.3}[0.3]{\includegraphics{./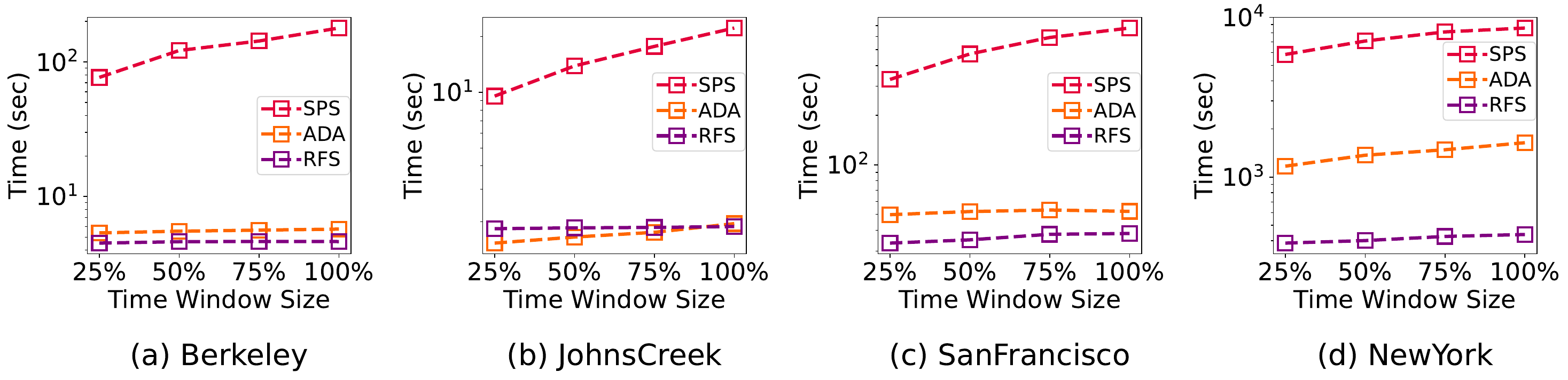}}
\vspace{-8pt}
\caption{Processing time with different time window sizes, varying in 25\%, 50\%, 75\%, 100\%.}
\label{exp1.4}
\end{figure*}

\section{Experiment}
\label{sec8:exp}

This section evaluates the efficiency of our proposed algorithms on various datasets with different parameters. We implemented all algorithms in C++17 and compiled them with g++ at the -O3 optimization level. Our code was run on an Intel Xeon Gold 6258R CPU @ 2.70GHz with 128GB memory. We used the Triangular Kernel Function as the default kernel function and Dijkstra as the shortest path distance computation algorithm in all algorithms.

\subsection{Datasets}

We used four datasets with various scales and categories, as shown in Table \ref{tab:datasets}. All networks were downloaded from OpenStreetMap using OSMnx. We assumed that each edge is a straight line and each event was matched to the nearest edge. The event datasets include police calls, paid parking records, and taxi trips. 

\begin{table}[h]
\centering
\caption{Parameters of datasets.}
\vspace{-5pt}
\label{tab:datasets}
\begin{tabular}{c|cccc} 
\hline
Dataset     & $\vert V \vert$ & $\vert E \vert$ & $N$ & $N / \vert E \vert$  \\ 
\hline
Berkeley    & 1576            & 4378            & 735366          & 168                              \\ 
\hline
Johns Creek & 3074            & 3471            & 979072          & 282                              \\ 
\hline
San Francisco  & 9700         & 16008           & 5379023         & 336                              \\ 
\hline
New York	& 55765           & 92229           & 38400730        & 416                              \\ 
\hline
\end{tabular}
\end{table}

\subsection{Compared with Pervious Work}

First, we evaluate the efficiency of our proposed solution, Range Forest Solution (RFS) with the Lixel Sharing optimization, in generating exact KDE values compared with two previous works, Shortest Path Sharing (SPS) and Aggregate Distance Augmentation (ADA). SPS uses the basic shortest path sharing framework with no index, while ADA will filter events that within the time window and build a linear index by their distances.

Each query batch may include multiple queries with different time windows, which are given online in random order. Therefore, other time windows are invisible when processing one time window. By default, each time window includes $70\%$ events.

\textbf{Varying Bandwidth.} First, we evaluate the processing time with different bandwidths (50m, 1000m, 3000m, 5000m) with the single query in four datasets. The lixel length is 10m, which is an appropriate resolution.

The results are presented in logarithmic form in Figure \ref{exp1.1}. The index-free method SPS is slower than other index-based methods by 1$\sim$2 orders of magnitude, especially with a large bandwidth. RFS achieves at most 6 times speedup compared with ADA, but performs poorly in small networks with a lower bandwidth, costing too much time in indexing.

\textbf{Varying Query Batch Size.} The previous experiment uses only one query. However, the off-line method ADA has to re-index for a new query time window. Therefore, we measure the processing time with different query batch sizes varying in 5, 10, 15, 20, 25 query time windows. The spatial bandwidth is 50m and the lixel length is 50m. In this scenario, re-loading and re-indexing all events will cost too much time.

\begin{figure}[t]\centering
	\scalebox{0.4}[0.4]{\includegraphics{./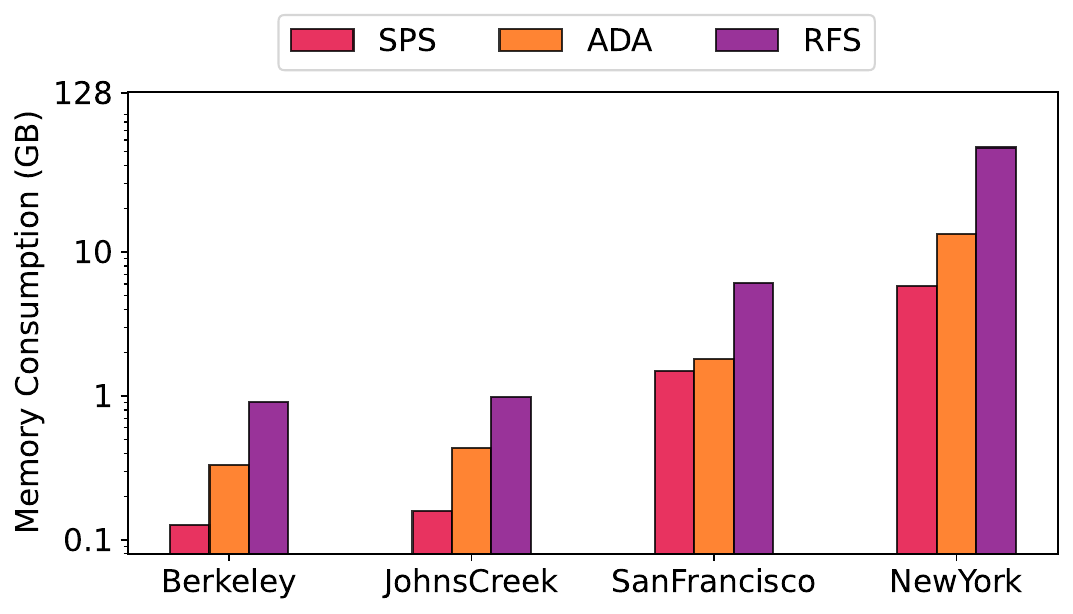}}
	\vspace{-8pt}
	\caption{Memory consumption of different methods.}
	\label{exp1.5}
\end{figure}

Figure \ref{exp1.2} shows a linear growth trend for all methods. The intercept is the indexing time while the slope is the processing time in each query. The preprocessing time of RFS is relatively higher in small datasets (i.e., Berkeley), but the growth ratio is much lower than SPS and ADA, leading to less processing time. For the other three datasets, RFS performs better in all cases.

\textbf{Varying Lixel Length.} The lixel length directly decides the resolution. For a sketchy quick view, 50m is enough since the average edge length is only 100m$\sim$200m. Users may further improve the resolution and set a lower lixel length. In this part, the lixel length varies in 5m, 10, 30m, 50m. Each query batch includes 5 time windows.

Figure \ref{exp1.3} displays the processing time in logarithmic form for different lixel lengths (5m, 10m, 30m, 50m). All results are roughly in the inverse proportion since a smaller lixel length leads to more lixels. RFS shows a significant advantage compared with other methods, especially in a lower lixel length (i.e., up to 4.5x speedup compared with ADA).

\textbf{Varying Time Window Size.} We also want to know how the event scale impacts and set different sizes of time windows varying in 25\%, 50\%, 75\%, 100\%.

As shown in Figure \ref{exp1.4}, SPS is linearly correlated to the number of events, so the processing time is increasing by the time window size. On the contrary, RFS has no relationship to the time window size. ADA is a little different since the time window size only affects the preprocessing step and the growth trend is not evident. Therefore, the efficiency of RFS will not be affected even with a larger number of events.

\textbf{Memory.} Figure \ref{exp1.5} shows the memory usage of three methods. Since SPS does not require any extra space, it can be seen as the dataset size. The memory usage of the other two methods is only related to the event size. RFS only cost 3 and 8 times the memory of ADA and SPS, respectively, to build the forest index. Considering the complex index structure, this memory cost is acceptable.

\begin{figure}[t!]\centering\vspace{-2ex}
	\scalebox{0.3}[0.3]{\includegraphics{./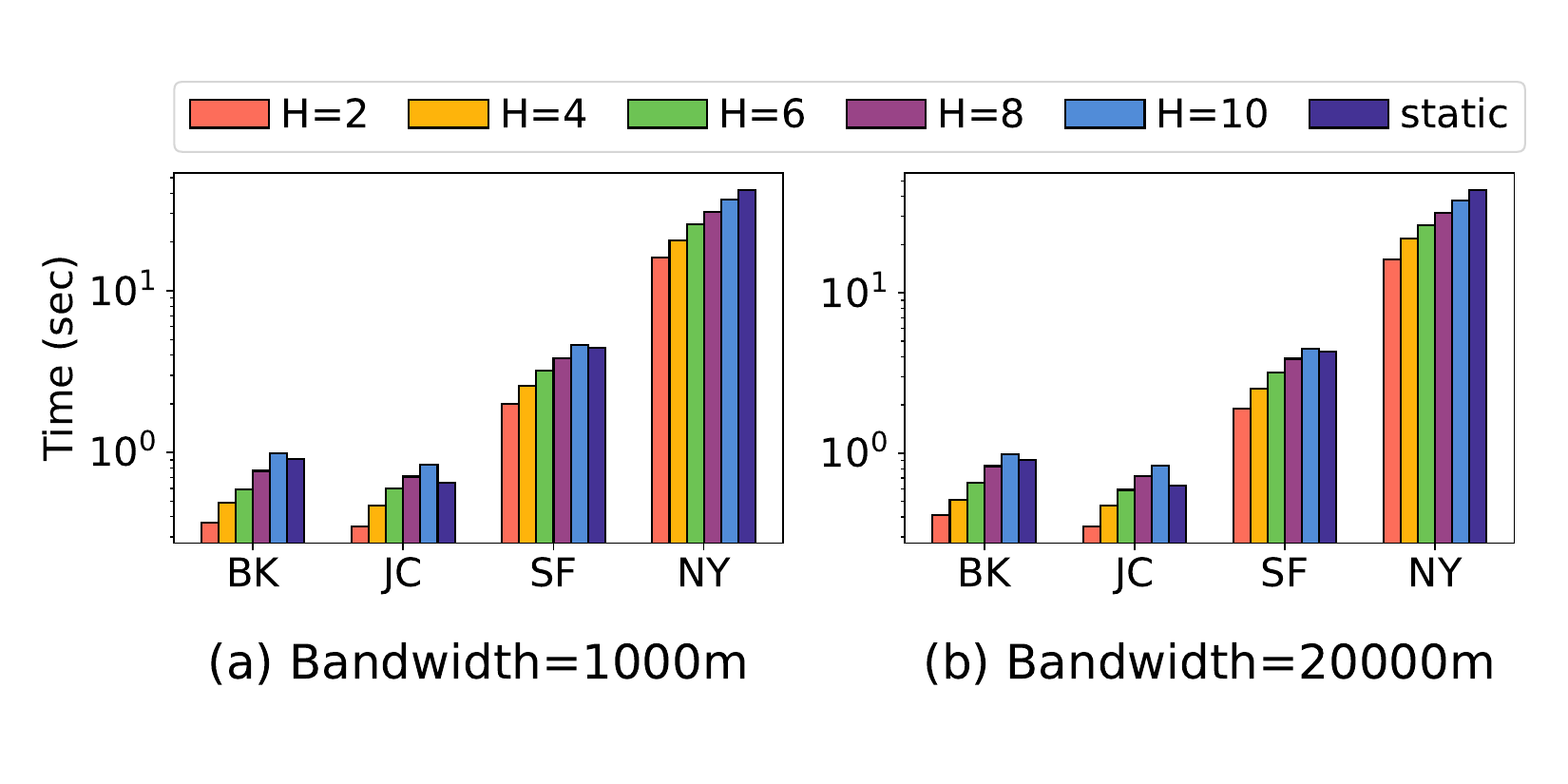}}
	\vspace{-20pt}
	\caption{Indexing time for different depth $H$.}
	\label{exp2.1}\vspace{-2ex}
\end{figure}

\subsection{DRFS Efficiency and Effectiveness}

This section evaluates the efficiency and accuracy of Dynamic Range Forest Solution (DRFS) with different depths $H$ ranging from 2 to 10. For comparison, we use RFS without the Lixel Sharing optimization, which has a static structure. We choose two bandwidths, 1000m and 20000m. The lixel length is fixed at 50m and the time window contains all events. The algorithm initially sets $H=1$ and gradually increases it to show the dynamic process.

\begin{figure}[t!]\centering
\scalebox{0.3}[0.3]{\includegraphics{./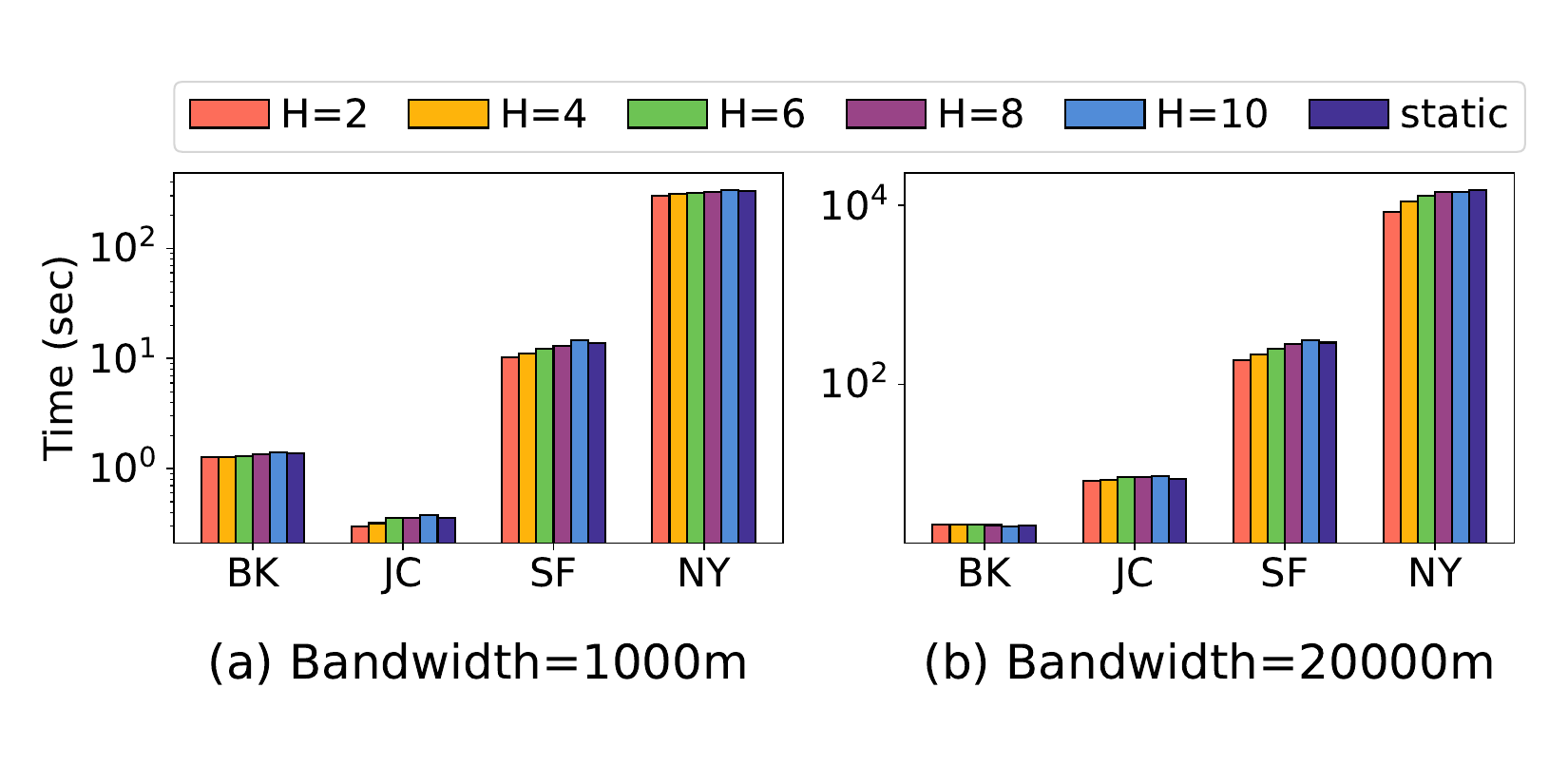}}
\vspace{-20pt}
\caption{Processing time for different depth $H$.}
\label{exp2.2}\vspace{-1ex}
\end{figure}

\begin{figure}[t!]\centering
	\scalebox{0.3}[0.3]{\includegraphics{./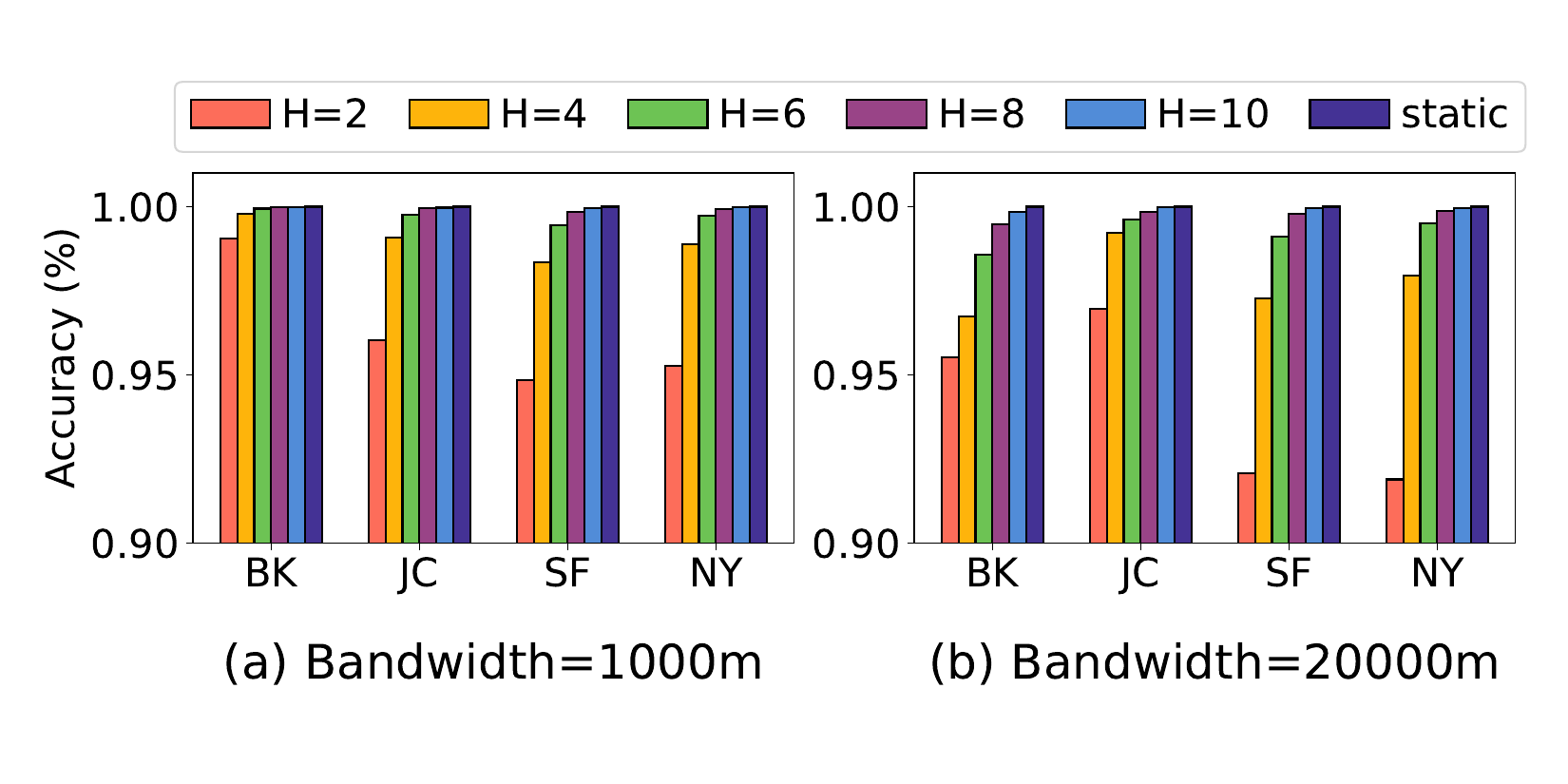}}
	\vspace{-20pt}
	\caption{Accuracy for different depth $H$.}
	\label{exp2.3}
\end{figure}

\textbf{Efficiency.} To illustrate the time cost in indexing and processing, Figure \ref{exp2.1} and Figure \ref{exp2.2} separately show the indexing time and processing time, respectively. Even when $H=10$, DRFS does not require too much time to build the index and is even faster than RFS on a large dataset(i.e., NewYork). Approximately, RFS is equivalent to DRFS with $H {\approx} 8$. In terms of the processing time, there is no significant difference with different $H$. While the indexing part is only executed once in a query batch, DRFS performs well in efficiency even with a larger $H$.

\textbf{Accuracy.} DRFS is an approximate solution, so we also need to take accuracy into consideration. All experiments reported highly accurate results shown in Figure \ref{exp2.3}. Even with $H{=}2$, the accuracy is still over $94\%$. Furthermore, accuracy increases rapidly with $H$, and when $H{=}10$, accuracies in all four datasets are greater than $99.9\%$.

\textbf{Memory.} Figure \ref{exp2.4} indicates the memory consumption, which reflects the index size. The trend is similar to Figure \ref{exp2.1} and the equivalent boundary is also $H {\approx} 8$. When $H{=}2$, the memory consumption is near ADA, indicating the feasibility of our quantization strategy. Moreover, the growth ratio is almost linear and can be predicted, which can help users to choose an appropriate parameter $H$.

Taking into account all these aspects, the experiment demonstrates that DRFS is highly practical. A small value of $H$ saves a significant amount of time and still yields a relatively accurate solution, whereas a large value of $H$ requires only slightly more time to produce nearly identical results. Therefore, users can choose the appropriate value of $H$ and adjust it dynamically.

\subsection{Kernel Function}

	Another feature of our framework is the replaceable kernel function. Since all computation processes with different kernel functions have a $O(1)$ complexity, they can generate a heatmap in the same amount of time.
	
	Figure \ref{fig:exp_kernel} illustrates three distinct heatmaps using the Triangular, Cosine, and Exponential kernel functions. Each heatmap is accompanied by its corresponding function graph positioned above. It is noteworthy that the value of the Cosine function is always greater than that of the Exponential function and then the Triangular function, thus all density values are normalized. Generally, the slope order from high to low is Triangular, Exponential, and Cosine, which is related to the smoothness of the results. Furthermore, the Triangular kernel function exhibits no truncation, whereas both the Cosine and Exponential kernel functions are truncated at $\pm 1$. This leads to a more concentrated result and distinct boundaries.

	\begin{figure}[t!]\centering
		\scalebox{0.3}[0.3]{\includegraphics{./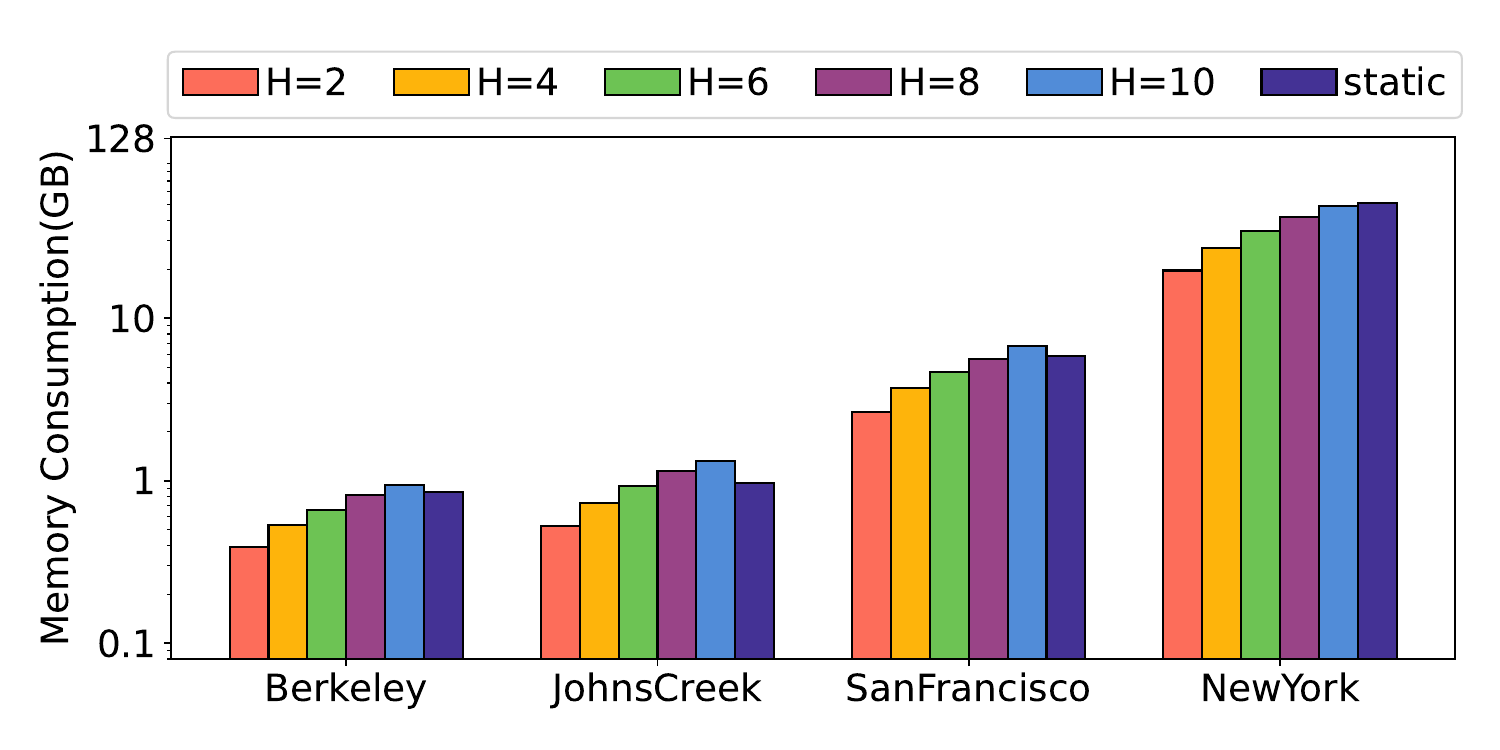}}
		\vspace{-8pt}
		\caption{Memory consumption for different depth $H$.}
		\label{exp2.4}\vspace{-1ex}
	\end{figure}

\begin{figure}[t!]\centering
	\begin{minipage}{0.3\linewidth}
	\centering\includegraphics[height=10mm, width=25mm]{./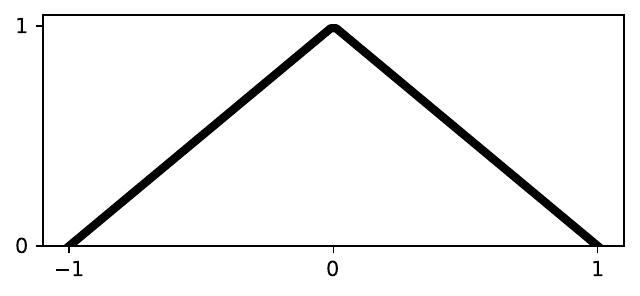}
	\end{minipage}
	\begin{minipage}{0.3\linewidth}
	\centering\includegraphics[height=10mm, width=25mm]{./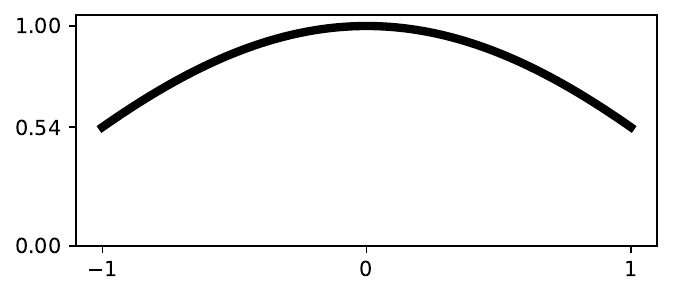}
	\end{minipage}
	\begin{minipage}{0.3\linewidth}
	\centering\includegraphics[height=10mm, width=25mm]{./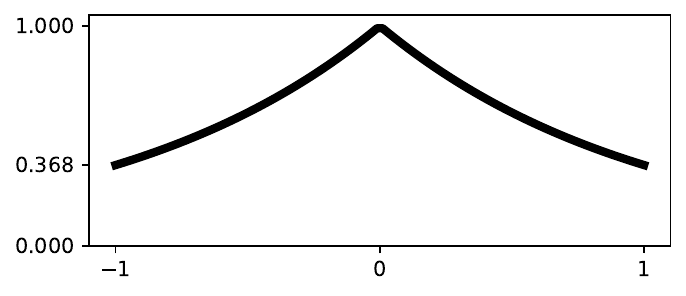}
	\end{minipage}
	
	\begin{minipage}{0.3\linewidth}
	\centering\includegraphics[height=32mm, width=24mm]{./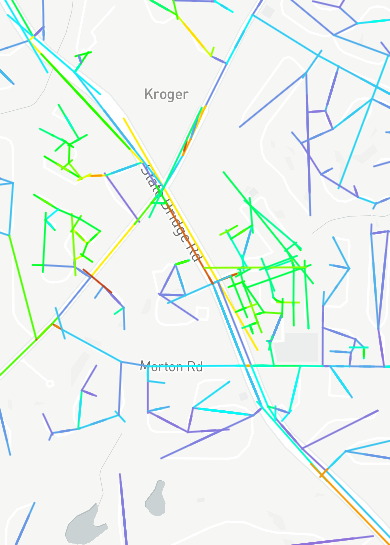}
	\caption*{(a) Triangular}
	\end{minipage}
	\begin{minipage}{0.3\linewidth}
	\centering\includegraphics[height=32mm, width=24mm]{./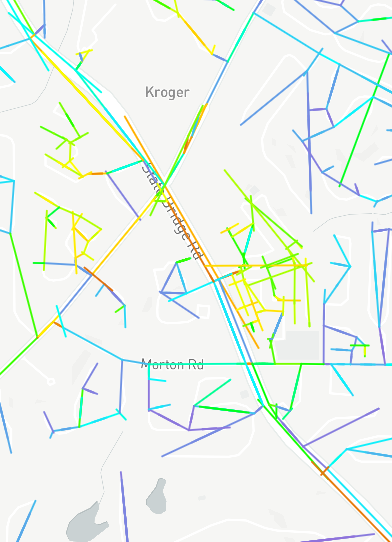}
	\caption*{(b) Cosine}
	\end{minipage}
	\begin{minipage}{0.3\linewidth}
	\centering\includegraphics[height=32mm, width=24mm]{./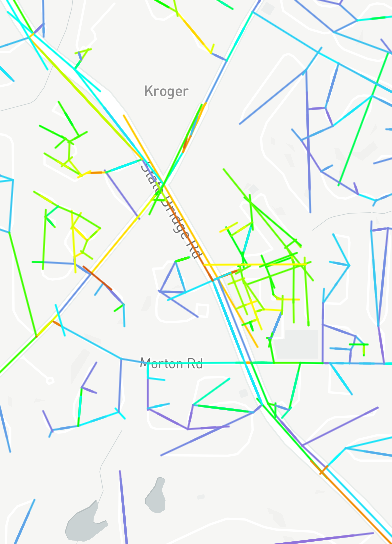}
	\caption*{(c) Exponential}
	\end{minipage}
	\vspace{8pt}
	\caption{TN-KDE result with different kernel functions.}
	\label{fig:exp_kernel}\vspace{-2ex}
\end{figure}

\noindent\textbf{Summary:} Our proposed method RFS shows significant speedup, up to 6 and 88.9 times compared to ADA and SPS, respectively, on responding to multiple online TN-KDEs. This improvement would be even more impressive with larger datasets and higher resolution requirements. RFS requires more space to store a large index, but it only incurs an 8-fold increase in memory usage and remains stable with more events. DRFS achieves a similar time and space complexity as RFS while maintaining an accuracy of over 99.9\% in most cases. When the index is quantized at $H=2$, DRFS can further save up to 40\% in time costs and 60\% in memory costs. We also tested other kernel functions and visualized their results, which matched well in high-density areas but differed in boundary regions.

\section{Conclusion}

	In this paper, we introduce the Range Forest Solution (RFS) for generating a heatmap on road networks with spatial and temporal data. 
	We also develop the Dynamic Range Forest Solution (DRFS) to support the dynamic structure and the insertion operation.
	Additionally, an optimization called Lixel Sharing is applied, which can share a similar KDE values between two adjacent lixels. 
	Finally, the kernel function in our solutions is also replaceable to generate an accurate result using either the Exponential or Cosine function.
	


\bibliographystyle{ACM-Reference-Format}
\bibliography{main}

\end{document}